\pgfplotsset{compat=1.15}
\def\BibTeX{{\rm B\kern-.05em{\sc i\kern-.025em b}\kern-.08em
    T\kern-.1667em\lower.7ex\hbox{E}\kern-.125emX}}
    \newcommand{\norm}[1]{\left\lVert#1\right\rVert}
\renewcommand{\epsilon}{\varepsilon}
\newcommand{\setd}{\ensuremath{\mathcal{D}}}
\tikzset{ar/.style={-latex,shorten >=-1pt, shorten <=-1pt}}
\newcommand\underrel[3][]{\mathrel{\mathop{#3}\limits_{%
      \ifx c#1\relax\mathclap{#2}\else#2\fi}}}
\def \A{\mathcal{A}}
\def \D{\mathcal{D}}
\def \E{\mathcal{E}}
\def \F{\mathcal{F}}
\def \G{\mathcal{G}}
\def \M{\mathcal{M}}
\def \N{\mathcal{N}}
\def \P{\mathcal{P}}
\def \S{\mathcal{S}}
\def \T{\mathcal{T}}
\def \U{\mathcal{U}}
\def \W{\mathcal{W}}
\def \X{\mathcal{X}}
\def \Y{\mathcal{Y}}
\def \Z{\mathcal{Z}}
\def \sG {\mathscr{G}}
\def \fu{\mathbf{u}}
\def \fx{\mathbf{x}}
\def \fy{\mathbf{y}}
\def \fY{\mathbf{Y}}
\def \fZ{\mathbf{Z}}
\def \f0{\mathbf{0}}
\definecolor{blau_1a}{RGB}{93,133,195}
\definecolor{blau_2a}{RGB}{0,156,218}
\definecolor{gruen_3a}{RGB}{80,182,149}
\definecolor{gruen_4a}{RGB}{175,204,80}
\definecolor{gruen_5a}{RGB}{221,223,72}
\definecolor{orange_6a}{RGB}{255,224,92}
\definecolor{orange_7a}{RGB}{248,186,60}
\definecolor{rot_8a}{RGB}{238,122,52}
\definecolor{rot_9a}{RGB}{233,80,62}
\definecolor{lila_10a}{RGB}{201,48,142}
\definecolor{lila_11a}{RGB}{128,69,151}
\definecolor{blau_1b}{RGB}{0,90,169}
\definecolor{blau_2b}{RGB}{0,131,204}
\definecolor{gruen_3b}{RGB}{0,157,129}
\definecolor{gruen_4b}{RGB}{153,192,0}
\definecolor{gruen_5b}{RGB}{201,212,0}
\definecolor{orange_6b}{RGB}{253,202,0}
\definecolor{orange_7b}{RGB}{245,163,0}
\definecolor{rot_8b}{RGB}{236,101,0}
\definecolor{rot_9b}{RGB}{230,0,26}
\definecolor{lila_10b}{RGB}{166,0,132}
\definecolor{lila_11b}{RGB}{114,16,133}
\definecolor{mycolor1}{RGB}{249,245,233}
\newcommand{\hX}{\widehat{X}}
\newcommand{\hY}{\widehat{Y}}
\theoremstyle{remark}	\newtheorem{theorem}{Theorem}
\theoremstyle{remark}	\newtheorem{lemma}[theorem]{Lemma}
\theoremstyle{remark}	\newtheorem{corollary}[theorem]{Corollary}
\theoremstyle{remark}	
\theoremstyle{remark} \newtheorem{definition}{Definition}
\theoremstyle{remark} \newtheorem{remark}{Remark}
\theoremstyle{remark} \newtheorem{example}{Example}
\begin{document}
\title{Deterministic Identification Over Channels With Power Constraints}

\author{
	\vspace{0.1cm}
    \IEEEauthorblockN{Mohammad J. Salariseddigh\IEEEauthorrefmark{1}, Uzi Pereg\IEEEauthorrefmark{1}, 
    Holger Boche\IEEEauthorrefmark{2}, and Christian Deppe\IEEEauthorrefmark{1}} \\
	\vspace{0.25cm}
    \IEEEauthorblockA{\normalsize \IEEEauthorrefmark{1}Institute for Communications Engineering, Technical University of Munich \\
    \IEEEauthorrefmark{2}Chair of Theoretical Information Technology, Technical University of Munich\\
    {\tt Email}: {\{mohammad.j.salariseddigh, uzi.pereg, boche, christian.deppe\}@tum.de}}
}
\maketitle
\begin{abstract}
The identification capacity is developed without randomization at neither the encoder nor the decoder.
In particular, full characterization is established for the deterministic identification (DI) capacity for the Gaussian channel and for the general discrete memoryless channel (DMC) with and without constraints. 
 Originally, Ahlswede and Dueck established the identification capacity with local randomness given at the encoder, resulting in a double exponential number of messages in the block length. In the deterministic setup, the number of messages scales exponentially, as in Shannon's transmission paradigm, but the achievable identification rates can be significantly higher than those of the transmission rates. 
 Ahlswede and Dueck further stated a capacity result for the deterministic setting of a DMC, but did not provide an explicit proof.  In this paper, a detailed proof is given for both the Gaussian channel and the general DMC.
 The DI capacity of a Gaussian channel is infinite regardless of the noise.
\end{abstract}
\begin{IEEEkeywords}
Channel Capacity, identification, deterministic codes, identification without randomization, Gaussian channel.
\end{IEEEkeywords}
\IEEEpeerreviewmaketitle
\section{Introduction}

In the fundamental communication paradigm considered by
\IEEEPARstart{S}{hannon} \cite{S48},  a sender wishes to convey a message through a noisy channel in a such a manner that  the receiver will be able to retrieve the original message. In other words, the decoder's task is to determine which message was sent.  Ahlswede and Dueck \cite{AD89} introduced a scenario of a different nature where the decoder only performs identification and determines whether a particular message was sent or not \cite{AD89,HanBook,SCR20-2}. Applications include identification plus transmission (point-to-multipoint communication) \cite{HanVerdu}, communication complexity \cite{T99}, private interrogation theory \cite{B09}, the tactile internet \cite{TactileInternet14}, vehicle-to-X communications \cite{KBMRAZT16,JVNRCR16}, digital watermarking \cite{M01,S01,AN06}, online sales \cite{GCE07,GC08}, industry 4.0 \cite{Industry4.0-1,Industry4.0-2,Industry4.0-3}, health care \cite{Healthcare04-Patent}, molecular communications \cite{Bush15,Hasel19,6G+} and other event-triggered systems.

We present two motivating examples for applications of the identification scheme. Molecular communication is a promising candidate for the sixth generation of cellular communication networks (6G) \cite{6G+,6G_PST}, in which some applications demand alerts to be identified \cite{6G_PST}. Furthermore, in other systems of molecular communication, a nano-device needs to determine the occurrence of a specific event. For instance, in the targeted drug delivery \cite{Muller04,Nakano13} or cancer treatment \cite{Hobbs_ea98,Jain99,Wilhelm16}, a nano-device will seek to know whether the blood pH exceeds a critical threshold or not, whether a specific drug is released or not, whether another nano-device has replicated itself, whether a certain molecule was detected, whether a target location in the vessels is identified, or whether the molecular storage is empty, etc \cite{Nakano14}. A second application for identification is  vehicle-to-X communications, where a vehicle that collects sensor data may ask whether a certain alert message concerning the future movement of an adjacent vehicle was transmitted or not \cite[Sec.~VII]{BD17}.


The identification problem \cite{AD89} can be regarded as a \emph{Post-Shannon} \cite{CGC03} model where the decoder does not perform an estimation, but rather a binary hypothesis test to decide between the hypotheses `sent' or `not sent', based on the observation of the channel output.
As the sender has no knowledge of the desired message that the receiver is interested in, the identification problem can be regarded as a test of many hypotheses occurring simultaneously.
The scenario where the receiver misses and does not identify his message is called a type I error, or `missed identification', whereas the event where the receiver accepts a false message is called a type II error, or `false identification'.

Ahlswede and Dueck \cite{AD89} required randomized coding for their identification-coding scheme. This means that a randomized source is available to the sender. The sender can make his encoding dependent on the output of this source. It is known that this resource cannot be used to increase the transmission capacity of discrete memoryless channels \cite{A78}.
A remarkable result of identification theory is that given local randomness at the encoder, reliable identification can be attained such that the code size, i.e., the number of messages, grows double exponentially in the block length $n$, i.e., $\sim 2^{ 2^{nR}}$ \cite{AD89}. This differs sharply from the traditional transmission setting where the code size scales only exponentially, i.e., $\sim{2^{nR}}$.
Beyond the exponential gain in identification, the extension of the problem to more complex scenarios reveals that the identification capacity has a very different behavior compared to the transmission capacity \cite{feedback,correlation,BV00,BD18_2,W04,BL17}.
For instance, feedback \emph{can} increase the identification capacity \cite{feedback} of a memoryless channel, as opposed to the transmission capacity \cite{S56}.
Nevertheless, it is difficult to implement randomized-encoder identification (RI) codes that will achieve such performance, because it requires the encoder to process a bit string of exponential length. The construction of identification codes is considered in \cite{SCR20-2,VK93,KT99,Bringer09,Bringer10}. Identification for Gaussian channels is considered in \cite{MasterThesis,LDB20,Labidi2021,Ezzine2021,BV00}.

In the deterministic setup, the number of messages scales exponentially in the blocklength \cite{AD89,AN99,J85,Bur00}, as in the traditional setting of transmission. Nevertheless, the achievable identification rates are significantly higher than those of transmission. In addition, deterministic codes often have the advantage of simpler implementation and simulation \cite{PP09}, explicit construction \cite{A09}, and single-block reliable performance. In particular, J\'aJ\'a \cite{J85} showed that the deterministic identification (DI) capacity\begin{footnote}{The DI capacity in the literature is also referred to as the non-randomized identification (NRI) capacity \cite{AN99} or the dID capacity \cite{BV00}.}\end{footnote}\;of a binary symmetric channel is 1 bit per channel use, as one can exhaust the entire input space and assign (almost) all sequences in the $n$-dimensional space $\{0,1\}^n$ as codewords. Ahlswede et al. \cite{AD89,AN99} stated that the DI capacity of a discrete memoryless channel (DMC) with a stochastic matrix $W$ is given by the 
logarithm of the number of distinct row vectors of $W$ (\cite[see Sec.~IV]{AD89} and \cite[see Abstr.]{AN99}).
Nonetheless, an explicit proof for this result was not provided in \cite{AD89,AN99}. 
Instead, Ahlswede and Cai \cite{AN99} referred the reader to a paper \cite{A80} which does not include identification and addresses a completely different model of an arbitrarily varying channel \cite{A80}. Since then, the problem of proving this result has remained unsolved, since a straightforward extension of the methods in \cite{A80}, using decoding territories, does not seem to yield the desired result on the DI capacity \cite{CaiEmail2}.

In this paper, we establish the DI capacity of a channel subject to an input constraint. Such a constraint is often associated with a limited power supply or regulation, as in the case of the Gaussian channel. Our main result is that the DI capacity of a DMC $\W$, under the input constraint $\frac{1}{n}\sum_{t=1}^n \phi(x_t)\leq A$, is given by
\begin{align}
    \mathsf{C}_{DI}(\W)=\max_{p_X \;:\, \mathbb{E}\{\phi(X)\} \leq A} H(X) \;,\,
\end{align}
and that the DI capacity of a Gaussian Channel $\mathscr{G}$ under power constraint $A$ is infinite, regardless of the noise in the channel.
For a DMC, we may assume without loss of generality that the rows of the channel matrix are distinct (see Section~\ref{Subsec.ChannelReduction}). This result has the following geometric interpretation. At first glance, it may seem reasonable that for the purpose of identification, one codeword could represent two messages. While identification allows overlap between decoding regions \cite{KE05,AADT20}, it turns out that overlap at the encoder is not allowed for deterministic codes. We observe that if two messages are represented by the same codeword, then the low probability of a type I error comes at the expense of the high probability of a type II error, and vice versa.
That is, as shown in our proof, if the  probability of missed identification is upper bounded by $\epsilon$, then the probability of false identification is lower bounded by $1-\epsilon$.
Thus, DI coding imposes the restriction that every message must have a distinct codeword.
The converse proof follows from this property in a  straightforward manner since the volume of the input subspace of sequences that satisfy the input constraint is 
$\approx 2^{n\mathsf{C}_{DI}(\W)}$.
A similar principle guides the direct part as well. The input space is covered such that each codeword is surrounded by a sphere of radius $n\epsilon$ to separate the codewords. For the Gaussian channel, the DI capacity can be achieved using a simple distance-decoder.

 By providing a detailed proof for this problem, we thus fill the gap in the previous analysis  \cite{AD89,AN99} as well.
In the proof, we use the method of types, while the derivation is based on ideas that are analogous to the combinatoric analysis of Hamming distances by J\'aJ\'a \cite{J85}.
Although the codebook construction is similar to that of Ahlswede's coding scheme  \cite{A80}, the decoder is significantly different. In particular, we do \emph{not} use decoding territories as in \cite{A80}, but rather perform a typicality check.
Nonetheless, the type-class intersection lemma and the message-set analysis in \cite{A80} turn out to be useful in our analysis as well. 
Hence, our proof combines techniques and ideas from both works, by J\'aJ\'a \cite{J85} and by Ahlswede \cite{A80}, to derive the DI capacity both with and without an input constraint.
\section{Definitions and Related Works} \label{sec:preliminaries}
In this section, we introduce the channel model and coding definitions. Here, we only consider the discrete memoryless channel (DMC). The channel description and coding definition for the Gaussian channel will be presented in Section~\ref{Sec.GaussianChannel}.

\subsection{Notation}
 We use the following notation conventions throughout. Calligraphic letters $\X,\Y,\Z,\ldots$ are used for finite sets.
Lowercase letters $x,y,z,\ldots$  stand for constants and values of random variables, and uppercase letters $X,Y,Z,\ldots$ stand for random variables.  
 The distribution of a random variable $X$ is specified by a probability mass function (pmf) $p_X(x)$ over a finite set $\X$. The set of all pmfs over $\X$ is denoted by $\P(\X)$, $H(X)$, and $I(X ;Y)$ are the entropy and mutual information, respectively;  all logarithms and information quantities are taken to the $2$. We use $x^j=(x_1,x_{2},\ldots,x_j)$ to denote  a sequence of letters from $\X$.
 A random sequence $X^n$ and its distribution $p_{X^n}(x^n)$ are defined accordingly. 
The set of consecutive natural numbers from $1$ through $M$ is denoted by $[\![M]\!]$.
The Hamming distance between two sequences $a^n$ and $b^n$ is defined as the number of positions for which the sequences have different symbols, i.e.,
\begin{align}
    d_H (a^n,b^n) = \left| \left\{ t \in [\![n]\!] ~;~ a_t \neq b_t \right\} \right| \;.\,
\end{align}
The $n$-dimensional Hamming sphere of radius $n\epsilon$ that is centered at $a^n$ is defined as
\begin{align}
    \S_{\epsilon}(a^n) = \left\{ x^n\in\X^n \,:\; d_H(x^n,a^n) < n\epsilon \right\} \;.\,
\end{align}
Further, we denote the hyper-sphere of radius 
$r$ around $\fx_0$ by
\begin{align} 
    \S_{\fx_0}(n,r) = \left\{ \fx \in \mathbb{R}^n \,:\; \norm{\fx-\fx_0} \leq r \right\} \;.\,
\end{align}
In the continuous case, we use the cumulative distribution function 
$F_X(x)=\Pr(X\leq x)$ for $x\in\mathbb{R}$, or alternatively, the probability density function (pdf) $f_X(x)$,  when it exists.
 The notation $\mathbf{x}=(x_1,x_{2},\ldots,x_n)$ is used instead of $x^n$  when it is understood from the context that the length of the sequence is $n$, and  the $\ell^2$-norm of $\mathbf{x}$ is denoted  by $\norm{\mathbf{x}}$. 
\subsection{Channel Description}
\label{Subsec.ChannelDescription}
A DMC $(\X,\Y,W)$ consists of  finite input and output alphabets $\X$ and $\Y$, respectively, 
and a  conditional pmf $W(y|x)$.
The channel is memoryless without feedback, and therefore   $W^n(y^n|x^n)= \prod_{t=1}^n W(y_t|x_t)$. We denote a DMC by 
$\W= (\X,\Y,W)$.
Next, we consider an input constraint. 
Let $\phi:\X\rightarrow [0,\infty)$ be some given bounded cost function, and define
	\begin{align}
	\label{Eq.Cost1}
	\phi^n(x^n)=&\frac{1}{n} \sum_{t=1}^n \phi(x_t) \;.\,
	\end{align}
Let $A>0$. Given an input constraint $A$  corresponding to  the cost function
$\phi^n(x^n)$, the channel input $x^n$ must satisfy
\begin{align}
    \label{Eq.Cost}
    \phi^n(x^n)\leq & A \;.\,
\end{align}
We may assume without loss of generality that $0\leq A\leq \phi_{\text{\,max}}$, where $\phi_{\text{\,max}}$ is given by
\begin{align}
    \phi_{\text{\,max}} = \underset{x\in\X}{\max}~\phi(x) \;.\,
\end{align}
It is also assumed that for some $x_0\in\X$, $\phi(x_0)=0$.
%
\subsection{Coding}
The definitions for DI codes, achievable rates, and capacity are given below.
\begin{definition}
\label{deterministicIDCode}
 A  $(2^{nR},n)$ DI code for a DMC $\W$ under input constraint $A$, assuming $2^{nR}$ is an integer, is defined as a system $(\U,\mathscr{D})$
that consists of 
 a codebook $\U=\{ u_i \}_{i\in[\![2^{nR}]\!]}$, $\U\subset \X^n$,
 such that
\begin{align} 
    \phi^n(u_i)\leq A \;,\,
\end{align}
for all $i\in[\![2^{nR}]\!]$ and a collection of decoding regions $\mathscr{D}=\{ \D_i \}_{i\in[\![2^{nR}]\!]}$ with $\bigcup_{i=1}^{2^{nR}}\D_i\subset \Y^n$. Given a message $i\in [\![2^{nR}]\!]$, the encoder transmits $u_i$. The decoder's aim is to answer the following question: Was a desired message $j$ sent or not? 
Two types of errors may occur: 
Rejection of the true message, or accepting a false message. Those error events are often referred to as type I and type II errors, respectively.
Specifically, $P_{e,1}^{(n)}(i)$ is the  type I error probability for rejecting the true message $i$, while $P_{e,2}^{(n)}(i,j)$ is the   type II error probability for accepting the false message $j$ given that the message $i$ was sent.

 The error probabilities of the identification code $(\U,\mathscr{D})$ are given by
\begin{align}
 P_{e,1}(i)&= W^n(\setd_i^c|u_i)  &&\hspace{-2cm}  \text{(missed-identification error)} \;,\, \label{Eq.TypeIErrorDef} \\
 P_{e,2}(i,j) & = W^n(\setd_j|u_i)  &&\hspace{-2cm} \text{(false identification error)} \;.\,
 \label{Eq.TypeIIErrorDef}
\end{align}
A $(2^{nR},n,\lambda_1,\lambda_2)$ DI code further satisfies
\begin{align}
    \label{Eq.TypeIError}
    P_{e,1}(i) &\leq \lambda_1 \;,\,
    \\
    \label{Eq.TypeIIError}
    P_{e,2}(i,j) &\leq \lambda_2 \;.\,
\end{align}
for all $i,j\in [\![2^{nR}]\!]$ such that
$i\neq j$.

A rate $R>0$ is called achievable if for every
$\lambda_1,\lambda_2>0$ and sufficiently large $n$, there exists a $(2^{nR},n,\lambda_1,\lambda_2)$ DI code. 
The operational DI capacity is defined as the supremum of achievable rates, and will be denoted by $\mathbb{C}_{DI}(\W)$.

\end{definition}

Alternatively, one may consider achievable identification rates for codes with a \emph{double}-exponential number of messages \cite{AD89}. A rate $R>0$ is said to be achievable in the \emph{double}-exponential scale if there exists a corresponding
 $(2^{2^{nR}},n,\lambda_1,\lambda_2)$ DI code. We denote the DI capacity in the double-exponential scale by $\underline{\mathbb{C}}_{DI}(\W)$.

As mentioned earlier, Ahlswede and Dueck \cite{AD89} needed randomized encoding for their identification-coding scheme. This means that a randomized source is available to the sender. The sender can make his encoding dependent on the output of this source.
Therefore, a randomized-encoder identification (RI) code is defined in a similar manner where the encoder is allowed to select a codeword $U_i$ at random according to some conditional input distribution $Q(x^n|i)$.
The RI capacities in the exponential and double-exponential scales are then  denoted by $\mathbb{C}_{RI}(\W)$ and $\underline{\mathbb{C}}_{RI}(\W)$, respectively.
Given local randomness at the encoder, reliable identification can be attained such that 
the number of messages grow double exponentially in the block length $n$, i.e., $\sim 2^{ 2^{nR}}$ \cite{AD89}. This differs sharply from the traditional transmission setting where the code size scales only exponentially, i.e., $\sim{2^{nR}}$. Remarkably, in \cite{AD89} it is shown that $\underline{\mathbb{C}}_{RI}(\W)=\mathbb{C}_T(\W)$, 
where $\mathbb{C}_T(\W)$ denotes the transmission capacity of the channel.

\begin{remark}
Observe that in general, if the capacity in an exponential scale is finite, then it is zero in the double exponential scale. Conversely, if the capacity in a double exponential scale is positive, then the capacity in the exponential scale is $+\infty$.
\end{remark}

\begin{remark}
The MC has recently made advances on the technological side. This development is about promoting complex networks, such as the Internet of Things (IoT), with MC. 
The IoT describes the integration of intelligent/smart machines and objects on the Internet. 
These smart devices can be accessed and controlled via the Internet. 
The advances made in the field of nanotechnology enable
the development of devices in the nano-meter range, which are called nanothings. 
The interconnection of nanothings with the Internet is known as Internet of NanoThings (IoNT)
and is the basis for various future healthcare and military applications \cite{Dress15}. 
Nanothings are based on synthesized materials, use electronic circuits, and EM-based communication. Unfortunately, these characteristics could be harmful for some application environments, such
as inside the human body. The concept of Internet of Bio-NanoThings (IoBNT) has been introduced in \cite{Aky15}, where nanothings are biological cells that are created using tools from synthetic biology and nanotechnology. Such biological nanothings are called bio-nanothings.
Similar to artificial nanothings, bio-nanothings have control (cell nucleus), power (mitochondrion), communication (signal pathways), and sensing/actuation (flagella, pili or cilia) units. For the communication between cells, MC is especially well suited, since the natural exchange of information between cells is already based on this paradigm. MC in cells is based on signal pathways (chains of chemical reactions) that process information that is modulated into chemical characteristics,
such as molecule concentration. Identification is a very interesting communication task for these applications. However, it is also unclear how RI codes can be incorporated into such systems. It is unclear how powerful random number generators should be developed for synthetic materials on these small scales. In the case of Bio-NanoThings, it is uncertain whether the natural biological processes can be controlled or reinforced by local randomness at this level. Therefore, for the design of synthetic IoNT, or for the analysis and utilization of IoBNT, it is interesting to consider identification with deterministic encoding.
\end{remark}

A geometric illustration for the type I and II error probabilities is given in Figure~\ref{Fig.GeometricID}. When the encoder sends the message $i$, but the channel output is outside $\D_i$, then type a I error occurs. This kind of error is also considered in traditional transmission. In identification, the decoding sets can overlap. A type II error covers the case where the output sequence belongs to the intersection of $\D_i$ and $\D_j$ for $j\neq i$.
\begin{figure}[ht!]
    \label{Fig.GeometricID}
    \scalebox{1}{
\begin{tikzpicture}[scale=.7][thick]
\draw[dashed] (0,.7) circle (3.6cm);
\draw[->] (0,4.4) -- ++(0,1.5)  node [fill=white,inner sep=6pt](a){$\X^n$};

\node[circle,draw, dash dot, thick, minimum size=.25mm] (u1) at (-2,2) {};




\node at (-2,1.5) (u1n) {$u_1$};
\node at (-1.8,-1.2) (u2n) {$u_2$};
\node at (1,2.4) (u3n) {$u_3$};
\node at (1.4,-1.8) (u4n) {$u_4$};

\draw [fill=gray!30!white, fill opacity=0.5, name path=i] (1,2.8) circle (.1cm);
\draw [fill=cyan, fill opacity=0.5, name path=k] (-2,2) circle (.1cm);
\draw [fill=red, fill opacity=0.3, name path=k] (1.4,-1.4) circle (.1cm);
\draw [fill=yellow, fill opacity=0.4, name path=k] (-1.8,-.8) circle (.1cm);
\draw[dashed] (14.3,.7) circle (4.4cm);
\draw[->] (14.3,5.2) -- ++(0,1.5)  node [fill=white,inner sep=6pt](a){$\Y^n$};

\draw (13,2) circle (1.5cm);
\draw (14,0) circle (1.5cm);
\draw (15,2.5) circle (1.25cm);
\draw (16,1) circle (1.4cm);

\node at (13,2) (D1n) {$\D_1$};
\node at (14,0) (D2n) {$\D_2$};
\node at (15,2.5) (D3n) {$\D_3$};
\node at (16,1) (D4n) {$\D_4$};

\draw [fill=cyan, fill opacity=0.5, name path=k] (13,2) circle (1.5cm);

\draw [fill=yellow, fill opacity=0.5, name path=k] (14,0) circle (1.5cm);

\draw [fill=gray, fill opacity=0.1, name path=i] (15,2.5) circle (1.25cm);

\draw [fill=red, fill opacity=0.4, name path=j] (16,1) circle (1.4cm);


\path (u1) edge [-> , thick, blue, bend left] node [sloped,midway,above,font=\small] {correct identification}(D1n);

\path (u1) edge [-> , thick, red, bend right] node [sloped,midway,below,font=\small] {type I error}(16.5,0.2);

\path (u1) edge [-> , thick, brown!80!black, bend right] node [sloped,midway,above,font=\small] {type II error}(13.3,1);

\end{tikzpicture}
}
   \caption{Geometric illustration of  identification errors in the deterministic setting. The arrows indicate three scenarios for the channel output, given that the encoder transmitted the codeword $u_1$ corresponding to $i=1$. If the channel output is outside $\D_1$, then a type I error has occurred, as indicated by the  bottom red arrow. This kind of error is also considered in traditional transmission. In identification, the decoding sets can overlap. If the channel output belongs to $\D_1$ but also belongs to $\D_2$, then a type II error has occurred, as indicated by the middle brown arrow. Correct identification occurs when the channel output belongs \emph{only} in $\D_1$, which is marked in blue.}
\end{figure}
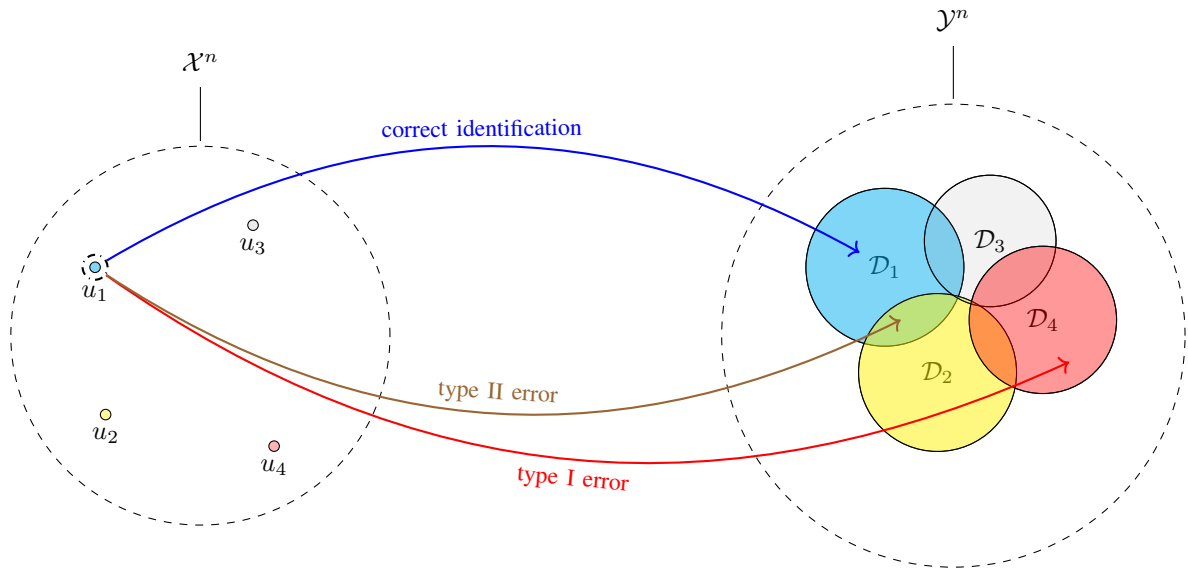
\subsection{Related Work}
We briefly review Ahlswede and Dueck's result \cite{AD89} on the RI capacity, i.e., when the encoder uses a stochastic mapping. As mentioned above, using RI codes, it is possible to identify a double-exponential number of messages in the block length $n$.
That is, given a rate $R<\underline{\mathbb{C}}_{RI}(\W)$, there exists a sequence of $(2^{2^{nR}},n)$ RI codes with vanishing error probabilities.
Despite the significant difference between the definitions in  the identification setting and in the transmission setting, it was shown that the value of the RI capacity in the double-exponential scale equals the Shannon capacity of transmission.
\begin{theorem}[see \cite{AD89, HanVerdu}]
\label{Th.DMCIdentificationCapacity}
	The RI capacity in the double-exponential scale of a DMC $\W$ is given by
	\begin{align}
	\underline{\mathbb{C}}_{RI}(\W) =\max_{p_X \,:\; \mathbb{E}\{ \phi(X)\}\leq A} I(X;Y) \;.\,
	\end{align} \label{IDtheorem}
	Hence, the RI capacity in the exponential scale is infinite, i.e.,
	\begin{align}
	\mathbb{C}_{RI}(\W)=\infty \;.\,
	\end{align}
\end{theorem}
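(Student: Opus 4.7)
My plan is to treat the two parts of Theorem \ref{Th.DMCIdentificationCapacity} separately: achievability and converse, noting that the claim $\mathbb{C}_{RI}(\W)=\infty$ is an immediate consequence of a positive double-exponential rate by the remark following Definition \ref{deterministicIDCode}. So it suffices to establish the double-exponential characterization.

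\textbf{Achievability.} The plan is a two-layer construction. First I fix an input distribution $p_X$ with $\mathbb{E}\{\phi(X)\}\leq A$ and a rate $R_T$ strictly less than $I(X;Y)$. By Shannon's channel coding theorem with input cost, there exists a sequence of transmission codes $\{v_m\}_{m=1}^{M}$ with $M=2^{nR_T}$, satisfying $\phi^n(v_m)\leq A$, whose maximum error probability $\lambda_T$ decays (say, exponentially) to zero. Next, for the identification layer, for each identification message $i\in[\![N]\!]$ with $N=2^{2^{nR}}$ I assign a subset $\T_i\subseteq[\![M]\!]$ of size $L$, designed so that distinct subsets have small relative overlap, i.e., $|\T_i\cap\T_j|/L\leq \mu$ for all $i\neq j$, with $\mu\to 0$. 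The randomized encoder, when message $i$ is to be sent, draws $\M$ uniformly from $\T_i$ and transmits $v_\M$. The decoder, queried about $j$, first uses the Shannon decoder to produce $\hat m$, and accepts iff $\hat m\in\T_j$. The type I error is then bounded by the transmission error $\lambda_T$, and the type II error by $\lambda_T+\mu$. The non-trivial combinatorial step is producing the family $\{\T_i\}$ at the right size; the standard route, which I would follow, is a concatenation of a Reed--Solomon code (giving a $q$-ary code of exponential length with few agreements) with the transmission code index set, so that $\log\log N$ scales like $nR$ for any $R<R_T$.

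\textbf{Converse.} I would follow the Han--Verdu soft-converse line. Suppose a sequence of $(2^{2^{nR}},n,\lambda_1^{(n)},\lambda_2^{(n)})$ RI codes exists with $\lambda_1^{(n)},\lambda_2^{(n)}\to 0$. Let $Q_i(\cdot)$ be the encoder's conditional input distribution for message $i$, and let $\nu_i=\sum_{x^n}Q_i(x^n)W^n(\cdot\,|x^n)$ be the induced output distribution. The cost constraint forces each $\nu_i$ to be supported on output laws that arise from admissible inputs. The two error conditions give $\nu_i(\D_i)\geq 1-\lambda_1^{(n)}$ and $\nu_j(\D_i)\leq \lambda_2^{(n)}$ for $j\neq i$, so the $\nu_i$'s are near-perfectly distinguishable by the tests $\1_{\D_i}$. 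I would then invoke the total-variation/hypothesis-testing bound together with a suitable resolvability or channel-dispersion style argument: the number of output distributions $\nu$ that can be approximated to within $o(1)$ by mixtures of $W^n(\cdot|x^n)$'s with total input cost $\leq A$ is at most $\exp\!\bigl(\exp(n(C+o(1)))\bigr)$, where $C=\max_{p_X:\mathbb{E}\phi(X)\leq A} I(X;Y)$. Hence $2^{2^{nR}}\leq 2^{2^{n(C+o(1))}}$ and $R\leq C$.

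The main obstacle will be the converse: establishing the soft converse cleanly in the constrained setting. Ahlswede--Dueck and Han--Verd\'u handled the unconstrained DMC; porting the resolvability bound to the input-cost setting requires tracking the set of feasible output distributions under $\phi^n(x^n)\leq A$ and showing the approximating $\epsilon$-net still has at most doubly-exponential size with exponent $I(X;Y)$ maximized under the cost constraint. The achievability obstacle is comparatively standard: it is the explicit combinatorial construction of the families $\{\T_i\}$ with small pairwise overlap at double-exponential cardinality, which the Reed--Solomon-based concatenation resolves.
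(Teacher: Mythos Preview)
The paper does not prove Theorem~\ref{Th.DMCIdentificationCapacity}; it is stated in the ``Related Work'' subsection as a known result, with citations to Ahlswede--Dueck \cite{AD89} (achievability) and Han--Verd\'u \cite{HanVerdu} (strong converse), and no proof is supplied. So there is no ``paper's own proof'' to compare against.

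Your proposal is a faithful high-level sketch of precisely those two classical proofs. The two-layer achievability (transmission code composed with a large family of index subsets having small pairwise overlap) is the Ahlswede--Dueck construction; whether one produces the family $\{\T_i\}$ by random selection, as in \cite{AD89}, or via a Reed--Solomon concatenation is a matter of taste. The resolvability-based converse is the Han--Verd\'u argument. Two small remarks: (i) the claim $\mathbb{C}_{RI}(\W)=\infty$ needs the caveat that the constrained mutual-information maximum is strictly positive, otherwise the double-exponential capacity is zero and the exponential RI capacity need not be infinite; (ii) in the converse sketch, the step ``the number of output distributions that can be approximated \ldots\ is at most $\exp(\exp(n(C+o(1))))$'' is exactly the resolvability theorem with cost constraint and is the place where real work is required---your sketch correctly identifies this as the main obstacle but does not indicate how you would carry it out (the standard route is to approximate each $Q_i$ by a uniform mixture over roughly $2^{nC}$ codewords and then count the possible mixtures).
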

In the next sections, we will consider the identification setting when the encoder does not have access to randomization.
\section{Main Results - DMC}
We give our main results on the DI capacity of the DMC.
For a DI code, as opposed to the randomized case, the number of messages $2^{nR}$ is only exponential in the blocklength. In this sense, DI codes are similar to transmission codes.  
However, the achievable rates for identification are significantly higher, as the DI capacity is given in terms of the input entropy instead of the mutual information.

\subsection{Channel Reduction}
\label{Subsec.ChannelReduction}
We begin with  a procedure of channel reduction where we remove identical rows from the channel matrix,
so that the remaining input letters have a lower cost compared to the deleted letters. As will be seen below, the DI capacity remains the same following this reduction.
The characterization of the DI capacity will be given in the next section in terms of the reduced input alphabet.

We begin with the definition of the reduced channel.
\begin{definition}[Reduced channel]
\label{Def.ReducedChannel}
Given a DMC $\W$  with a stochastic matrix $W:\X\to\Y$, we define the reduced DMC $W_r$ as follows.
Let $ \{\X(\ell)\}$ be a partition of $\X$ into equivalent classes, so that two letters $x$ and $x'$ belong to the same equivalent class if and only if the corresponding rows are identical, namely
\begin{align}
    x,x'\in\X(\ell) \;\Leftrightarrow\;
    W(y|x)=W(y|x') \; \quad \forall y\in\Y \;.\,
\end{align}
For every class $\X(\ell)$, assign a representative element
\begin{align}
z(\ell) = \arg\min_{x\in\X(\ell)}~\phi(x) \;,\,
\end{align}
which is associated with the lowest input cost. If there is more than one letter that is associated with the lowest input cost in $\X(\ell)$, then choose one of them arbitrarily.
Then the reduced input alphabet is defined as
\begin{align}
    \label{Eq.X_r}
    \X_r=\{ z(\ell) \} \;,\,
\end{align}
and the reduced DMC $\W_r$ is defined by a channel matrix $W_r:\X_r\to\Y$, consisting of the rows in $\X_r$, i.e.,
\begin{align}
    W_r(y|x)=W(y|x) \;,\,
\end{align}
for $x\in\X_r$ and $y\in\Y$.
\end{definition}
\begin{remark}
We note that the reduction procedure above can be viewed as merging input letters with identical rows in $W$. Furthermore, the channels $\W$ and $\W_r$ are \emph{equivalent} in the sense that $W$ and $\W_r$ are degraded with respect to each other \cite[Section III]{IA13}.
Thus, the property in the lemma below is not surprising.
\end{remark}
\begin{lemma}
\label{Lem.Reduction}
 The operational capacities of the reduced channel $\W_r$ and the original channel $\W$ are the same:
\begin{align}
    \mathbb{C}_{DI}(\W) = \mathbb{C}_{DI}(\W_r) \;.\,
\end{align}
\end{lemma}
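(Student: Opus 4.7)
My plan is to establish the equality of capacities by showing the two inequalities $\mathbb{C}_{DI}(\W) \geq \mathbb{C}_{DI}(\W_r)$ and $\mathbb{C}_{DI}(\W) \leq \mathbb{C}_{DI}(\W_r)$ separately, via code-transformation arguments at the level of $(2^{nR}, n, \lambda_1, \lambda_2)$ DI codes. Since a rate is achievable if codes exist for every $\lambda_1, \lambda_2 > 0$, I may restrict attention to codes with $\lambda_1 + \lambda_2 < 1$ throughout.

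For the easy direction $\mathbb{C}_{DI}(\W) \geq \mathbb{C}_{DI}(\W_r)$, I would observe that every $(2^{nR}, n, \lambda_1, \lambda_2)$ DI code for $\W_r$ is automatically a DI code for $\W$ with identical parameters. Indeed, $\X_r \subseteq \X$, so the codewords lie in $\X^n$; the rows of $W_r$ coincide with the corresponding rows of $W$, so $W_r^n(y^n|u_i) = W^n(y^n|u_i)$ for all codewords, preserving both error probabilities; and since the cost function $\phi$ is the same on $\X_r$ as on $\X$, the input constraint is preserved verbatim.

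For the harder direction $\mathbb{C}_{DI}(\W) \leq \mathbb{C}_{DI}(\W_r)$, given a $(2^{nR}, n, \lambda_1, \lambda_2)$ DI code $(\U, \mathscr{D})$ for $\W$ with $\lambda_1 + \lambda_2 < 1$, I would construct a DI code for $\W_r$ by letterwise replacement. For each codeword $u_i = (u_{i,1}, \ldots, u_{i,n})$, let $\ell_{i,t}$ be the equivalence class of $u_{i,t}$ and define $\tilde{u}_i = (z(\ell_{i,1}), \ldots, z(\ell_{i,n})) \in \X_r^n$. Since $\phi(z(\ell)) \leq \phi(x)$ for every $x \in \X(\ell)$ by the definition of $z(\ell)$, we obtain $\phi^n(\tilde{u}_i) \leq \phi^n(u_i) \leq A$, so the input constraint holds for the new codebook. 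Moreover, by definition of the equivalence classes, $W(y|u_{i,t}) = W(y|z(\ell_{i,t}))$ for all $y$ and $t$, hence $W^n(y^n|u_i) = W^n(y^n|\tilde{u}_i) = W_r^n(y^n|\tilde{u}_i)$ for every $y^n \in \Y^n$. Keeping the same decoding regions $\mathscr{D}$, both error probabilities are preserved exactly.

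The one thing that must be verified is that the new codewords $\{\tilde{u}_i\}_{i \in [\![2^{nR}]\!]}$ are distinct, so that the reduced code still has $2^{nR}$ messages. This is the only real obstacle, and it is precisely the geometric principle highlighted in the paper: if $\tilde{u}_i = \tilde{u}_j$ for some $i \neq j$, then the above identity gives $W^n(\cdot|u_i) = W^n(\cdot|u_j)$, and hence
\begin{align}
P_{e,2}(i,j) = W^n(\D_j|u_i) = W^n(\D_j|u_j) = 1 - P_{e,1}(j) \geq 1 - \lambda_1.
\end{align}
Combined with $P_{e,2}(i,j) \leq \lambda_2$, this forces $\lambda_1 + \lambda_2 \geq 1$, contradicting our assumption. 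Therefore $i \mapsto \tilde{u}_i$ is injective, and $(\{\tilde{u}_i\}, \mathscr{D})$ constitutes a valid $(2^{nR}, n, \lambda_1, \lambda_2)$ DI code for $\W_r$ under input constraint $A$. Taking $n \to \infty$ with $\lambda_1, \lambda_2 \to 0$ yields $\mathbb{C}_{DI}(\W) \leq \mathbb{C}_{DI}(\W_r)$, completing the proof.
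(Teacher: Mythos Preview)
Your proof is correct and follows the same approach as the paper: letterwise projection of codewords onto $\X_r$ preserves the channel statistics and hence both error probabilities exactly, while the easy direction is immediate since $\X_r \subseteq \X$. Your distinctness verification, while valid, is unnecessary --- the definition of a $(2^{nR},n)$ DI code treats $\{u_i\}_{i\in[\![2^{nR}]\!]}$ as an indexed family rather than a set of that size, and the paper's proof accordingly omits this step (distinctness of codewords is established separately, as part of the converse, in Lem.~\ref{Lem.DConverse}).
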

We give the proof of Lemma~\ref{Lem.Reduction} in Appendix~\ref{App.Reduction}.
As we will see shortly, the DI capacity of a DMC $\W$ depends on $W$ only through $\X_r$. That is,
the DI capacity does not depend on the individual values of the channel matrix and  depends solely on the distinctness of its rows.
\subsection{Capacity Theorem}
In this section, we give our main result on the DI capacity of a channel subject to input constraint. The capacity result is stated in terms of the reduced channel as defined in the previous section. Let $\W$ be a DMC channel with input cost function $\phi(x)$ and input constraint $A$ as specified in (\ref{Eq.Cost}).
Define
\begin{align}
    \mathsf{C}_{DI}(\W) = \underset{p_X \,:\; \mathbb{E}\{\phi(X)\} \leq A}{\max}~H(X) \;,\,
\end{align}
for $X\sim p_X$.

\begin{theorem}
\label{Th.DDICapacity}
The DI capacity of a DMC   $\W$ under input constraint is given by
\begin{align}
    \mathbb{C}_{DI}(\W) = \mathsf{C}_{DI}(\W_r) \;,\,
\end{align}
where $W_r$ denotes the reduced channel (see Definition~\ref{Def.ReducedChannel}). Hence, the DI capacity in the double exponential scale is zero, i.e.,
$\underline{\mathbb{C}}_{DI}(\W)=0$.
\end{theorem}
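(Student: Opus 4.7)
The plan is to first reduce to the channel $\W_r$ with distinct rows via Lemma \ref{Lem.Reduction}, so that the target simplifies to $\mathbb{C}_{DI}(\W) = H^* := \max_{p_X:\,\mathbb{E}[\phi(X)]\leq A} H(X)$. For the converse, the key observation is that deterministic identification forces the codewords to be pairwise distinct: if $u_i=u_j$ for some $i\neq j$, then $W^n(\D_j\mid u_i)=W^n(\D_j\mid u_j)$ must simultaneously satisfy $\geq 1-\lambda_1$ (to avoid a missed identification when $j$ is sent) and $\leq \lambda_2$ (to avoid a false identification when $i$ is sent), which is impossible whenever $\lambda_1+\lambda_2<1$. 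Combined with the cost constraint $\phi^n(u_i)\leq A$, the codebook lies in a set whose cardinality is bounded by $(n+1)^{|\X|}\cdot 2^{nH^*}$ via a standard method-of-types count over all types satisfying the constraint, immediately yielding $R\leq H^*+o(1)$.

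For the direct part I would approximate $p_X^*$ achieving $H^*$ by a sequence of types $P_n$ with $\mathbb{E}_{P_n}[\phi(X)]\leq A$ and $H(P_n)\to H^*$, and build the codebook greedily inside the type class $T_n(P_n)$ by removing all sequences within Hamming distance $2n\epsilon$ of each already-chosen codeword. Since a Hamming ball of radius $2n\epsilon$ has at most $2^{n\delta(\epsilon)}$ elements with $\delta(\epsilon)\to 0$, while $|T_n(P_n)|\doteq 2^{nH(P_n)}$, this packing delivers at least $2^{n(H(P_n)-\delta(\epsilon))}$ codewords, exponentially close to the target rate. The decoder is a joint typicality test, $\D_i=\{y^n:(u_i,y^n)\in T^{\epsilon'}_n(P_n\times W)\}$, and the type I probability $W^n(\D_i^c\mid u_i)$ vanishes by the conditional typicality lemma.

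The principal obstacle is the type II bound. For $y^n\in\D_j$ the joint type of $(u_j,y^n)$ is forced to be close to $P_n\times W$, whereas $y^n$ is actually drawn from $W^n(\cdot\mid u_i)$; in the $\geq 2n\epsilon$ coordinates where $u_i$ and $u_j$ disagree, say with letters $a'\neq a$ respectively, the outputs are generated from $W(\cdot\mid a')$ yet their empirical distribution must approximate $W(\cdot\mid a)$. Because $\W=\W_r$ has pairwise distinct rows, $D_{\min}:=\min_{a\neq a'}D\bigl(W(\cdot\mid a)\,\|\,W(\cdot\mid a')\bigr)>0$, and a standard type-counting estimate (essentially the type-class intersection lemma adapted from \cite{A80}) gives $W^n(\D_j\mid u_i)\leq 2^{-n\,\Omega(\epsilon\, D_{\min})}$, well below $\lambda_2$ for $n$ large. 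Taking $n\to\infty$ and $\epsilon\to 0$ yields $\mathbb{C}_{DI}(\W)\geq H^*$, matching the converse; since this capacity is finite, the remark in Section~\ref{sec:preliminaries} then forces $\underline{\mathbb{C}}_{DI}(\W)=0$.
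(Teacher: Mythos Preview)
Your proposal is correct and tracks the paper's proof closely: the same reduction via Lemma~\ref{Lem.Reduction}, the same converse (distinct codewords forced by $\lambda_1+\lambda_2<1$, then a type count over cost-feasible types), the same joint-typicality decoder, and the same reliance on Ahlswede's type-class intersection lemma from \cite{A80} for the type~II bound. The one genuine difference is the codebook existence step. You use a deterministic Gilbert--Varshamov greedy packing of the type class, which is more elementary and immediately yields at least $2^{n(H(P_n)-\delta(\epsilon))}$ codewords by dividing $|\T(P_n)|$ by the Hamming-ball volume. The paper instead draws codewords uniformly at random from $\T(p_X)$ and invokes a concentration lemma (Lemma~\ref{Lem.BernsteinIneq}, also from \cite{A80}) to show that, after discarding any word within Hamming distance $n\epsilon$ of another, at least half the codewords survive with double-exponentially high probability. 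Your route avoids this probabilistic machinery at no cost to the conclusion; the paper's route is closer in spirit to \cite{A80} and gives a stronger existence statement that is not actually needed here.

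One small caution on your type~II heuristic: the claim that on the disagreement coordinates the empirical output distribution must approximate $W(\cdot\mid a)$ is not quite what joint typicality with $u_j$ enforces---the constraint is on \emph{all} coordinates where $u_j=a$, which mixes positions where $u_i=a$ (outputs genuinely from $W(\cdot\mid a)$) with positions where $u_i=a'\neq a$. So the bound $2^{-n\,\Omega(\epsilon D_{\min})}$ does not drop out of Sanov as directly as your sketch suggests. Since you ultimately defer to the intersection lemma this does not affect validity, but the $D_{\min}$ intuition should be regarded as motivation rather than the actual mechanism.
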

We prove the direct part in Subsection~\ref{Subsec.Achievability} and the converse part in Subsection~\ref{Subsec.Converse}.
Notice that we have characterized the DI capacity of the DMC $\W$ in terms of its reduced version, as specified in Lemma~\ref{Lem.Reduction}

\begin{corollary}[also in {\cite{AD89,AN99}}]
\label{Co.DDICapacity0}
The DI capacity of a DMC $\W$ without  constraints, i.e., with $A=\phi_{\max}$, is given by
\begin{align}
    \mathbb{C}_{DI}(\W) = \log \left( n_{row}(W) \right) \;,\,
\end{align}
where $n_{row}(W)$ is the number of distinct rows of $W$.
\end{corollary}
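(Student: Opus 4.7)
The plan is to derive the corollary as a direct specialization of Theorem~\ref{Th.DDICapacity} to the unconstrained regime obtained by setting $A=\phi_{\max}$. By that theorem, the operational DI capacity satisfies
\[
\mathbb{C}_{DI}(\W)=\mathsf{C}_{DI}(\W_r)=\max_{p_X\,:\;\mathbb{E}\{\phi(X)\}\leq A}H(X),
\]
where the maximization is over input distributions supported on the reduced alphabet $\X_r$. The task then reduces to evaluating this maximum in closed form and identifying $|\X_r|$ as a combinatorial property of $W$.

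First, I would observe that when $A=\phi_{\max}=\max_{x\in\X}\phi(x)$, the cost constraint $\mathbb{E}\{\phi(X)\}\leq A$ is automatically satisfied by every pmf on $\X_r\subseteq\X$, since $\phi(x)\leq\phi_{\max}$ pointwise. Consequently, the optimization collapses to an unconstrained entropy maximization over $\P(\X_r)$. Second, a standard argument shows that the entropy of a random variable on a finite alphabet of size $|\X_r|$ is maximized by the uniform distribution and equals $\log|\X_r|$. Third, I would invoke Definition~\ref{Def.ReducedChannel}: the reduced alphabet $\X_r$ is constructed by selecting exactly one representative from each equivalence class of input letters that induce identical rows of $W$. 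Therefore $|\X_r|$ equals the number of equivalence classes, which by construction coincides with the number of distinct rows $n_{row}(W)$. Stringing these three observations together yields $\mathbb{C}_{DI}(\W)=\log\!\bigl(n_{row}(W)\bigr)$, as claimed.

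There is essentially no genuine obstacle here beyond bookkeeping; the combinatorial heavy lifting has already been carried out by Theorem~\ref{Th.DDICapacity} and the channel-reduction Lemma~\ref{Lem.Reduction}. The only point that warrants care is verifying that the choice $A=\phi_{\max}$ is admissible within the framework of Subsec.~\ref{Subsec.ChannelDescription} (it is, since the excerpt explicitly permits $0\leq A\leq\phi_{\max}$) and that the representative-selection rule in Definition~\ref{Def.ReducedChannel} indeed produces one element per equivalence class, so that $|\X_r|=n_{row}(W)$; both facts are immediate from the definitions and require no further argument.
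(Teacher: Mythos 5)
Your argument is exactly the paper's: specialize Theorem~\ref{Th.DDICapacity} to $A=\phi_{\max}$, observe the cost constraint becomes vacuous so the maximum entropy over $\P(\X_r)$ is $\log|\X_r|$, and identify $|\X_r|=n_{row}(W)$ via Definition~\ref{Def.ReducedChannel}. This matches the paper's one-line derivation step for step, just spelled out in slightly more detail.
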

The corollary above is an immediate consequence of Theorem~\ref{Th.DDICapacity}. Indeed, for $A=\phi_{\max}$, we have
\begin{align}
    \mathsf{C}_{DI}(\W_r) & = \underset{p_X \,,\; \text{supp}\{ p_X \}\subseteq \X_r}{\max}~H(X)
    \nonumber\\ 
    & = \log \left| \X_r \right|
    \nonumber\\
    & = \log \left( n_{row}(W) \right) \;,\,
\end{align}
since the maximal value of $H(X)$ is $\log|\X|$, and follows the definition of $\X_r$ in Definition~\ref{Def.ReducedChannel}.
\begin{remark}
    Ahlswede et al. \cite{AD89,AN99} stated that the result in Corollary~\ref{Co.DDICapacity0} on the DI capacity of a DMC without constraints ( \cite[see Sec.~IV]{AD89} and \cite[see Abstr.]{AN99}). Nonetheless, an explicit proof for this result was not provided in \cite{AD89,AN99}. Instead, Ahlswede and Cai \cite{AN99} referred the reader to a paper \cite{A80} which does not include identification but rather the arbitrarily varying channel \cite{A80}. Since then, the problem of proving this result has remained unsolved, as a straightforward extension of the methods in \cite{A80}, using decoding territories, does not seem to yield the desired result on the DI capacity.
\end{remark}
\begin{remark}
\label{Rem.Geometry}
Our result in Theorem~\ref{Th.DDICapacity} has the following geometric interpretation. At a first glance, it may seem reasonable that for the purpose of identification, one codeword could represent two messages. However, as can be seen in the converse proof in Subsection~\ref{Subsec.Achievability}, the deterministic setting imposes the restriction that every message must have a distinct codeword. While identification allows overlap between decoding regions \cite{KE05}, it turns out that overlap at the encoder is not allowed for deterministic codes.
The converse proof follows from this property in a  straightforward manner since the volume of the input subspace of sequences that satisfy the input constraint is 
$\approx 2^{n\mathsf{C}_{DI}(\W_r)}$. A similar principle guides the direct part as well. Namely, the input space is covered such that each codeword is surrounded by a sphere of radius $\frac{n\epsilon}{2}$ to separate the codewords. 
\end{remark}

To illustrate our results, we give the following example.
\begin{example}
 Consider the binary symmetric channel (BSC),
 \begin{align}
     Y=X+Z \mod 2 \;,\,
 \end{align}
 where $\X=\Y=\{0,1\}$, $Z\sim\text{Bernoulli}(\epsilon)$,
 with crossover probability $0\leq\epsilon\leq \frac{1}{2}$. Suppose that the channel is subject to a Hamming weight input constraint, 
\begin{align}
	\frac{1}{n} \sum_{t=1}^n x_t \leq A \;,\,
\end{align}
with $\phi(x)=x$. Observe that for $\epsilon=\frac{1}{2}$, the rows of the channel matrix are identical. Hence, the reduced input alphabet consists of one letter, and the DI capacity is zero (see Definition~\ref{Def.ReducedChannel}).

Now, suppose that $\epsilon<\frac{1}{2}$. Then the rows of the channel matrix $W=\left(\begin{matrix} 1-\epsilon & \epsilon \\ \epsilon& 1-\epsilon \end{matrix}\right) $ are distinct, hence $\W_r=\W$. Since the channel input is binary, 
\begin{align}
    \mathsf{C}_{DI}(\W) 
    = \max_{0\leq p\leq A} H_2(p) \;,\,
    \label{eq:BSCcapacity}
\end{align}
where $H_2(p)$ is the binary entropy function and is given by
\begin{align}
    H_2(p) = - (1-p) \log_2(1-p) - p\log_2(p) \;.\,
\end{align}
Therefore, by 
Theorem~\ref{Th.DDICapacity}, the DI capacity of the BSC with Hamming weight constraint is given by
\begin{align}
    \mathbb{C}_{DI}(\W)&=\begin{cases}
    H_2(A) &\text{if $A<\frac{1}{2}$} \;,\, \\
    1      &\text{if $A\geq \frac{1}{2}$} \;,\,
    \end{cases}
     \label{eq:BSCcapacityOp}
\end{align}
(see Figure~\ref{Fig.DICapacityvsIC}). 
To show the direct part, set $X\sim\text{Bernoulli}(A)$ if $A<\frac{1}{2}$, and $X\sim\text{Bernoulli}\big(\frac{1}{2}\big)$ otherwise. 
The converse part follows from (\ref{eq:BSCcapacity}), as the binary entropy function $H_2(p)$ is strictly increasing on $0\leq p\leq \frac{1}{2}$, attaining its maximum value $H_2(\frac{1}{2})=1$, and strictly decreasing on $ \frac{1}{2}<p\leq 1$ (see Figure~\ref{Fig.DICapacityvsIC}). In accordance with Remark~\ref{Rem.Geometry}, the geometric interpretation is that the binary Hamming ball of radius $np$ can be covered with codewords. As the volume of the Hamming ball is approximately $ 2^{nH_2(p)}$, one can achieve rates that are arbitrarily close to $H_2(p)$.
Without an input constraint, i.e., for $A=1$, we recover the result of J\'aJ\'a \cite{J85},
\begin{align}
    \mathbb{C}_{DI}(\W) = 1 \;.\,
\end{align}
This example demonstrates that the DI capacity is discontinuous in the channel statistics, as $\mathbb{C}_{DI}(\W,L) = 1$ for $\epsilon < \frac{1}{2}$ and $\mathbb{C}_{DI}(\W,L) = 0$ for $\epsilon = \frac{1}{2}$.
\begin{figure}[htb]
\label{Fig.DICapacityvsIC}
	\centering
    \begin{tikzpicture}[
  declare function={
    func(\x)= and(\x >= 0, \x <= .5) * (-\x*log2(\x)-(1-\x)*log2(1-\x)) + (\x >= .5) * (1);
  }]
\begin{axis}[
            xtick={0,.5,1},xticklabels={$0$,$\frac{1}{2}$,$1$},
            ytick={0,1},
            xlabel= \text{Input Constraint, A} ,
			ylabel= DI Capacity,
			xmin = 0,
			xmax = 1.1,
			ymin = 0,
			ymax = 1.22,
			width= 80mm,
			height = 60mm,
			axis x line = bottom,
			axis y line = left,
            legend style={
            at={(1.2,1.2)}},
            domain=0:1,samples=1000,
]
\addplot [rot_8b, thick,dashed] {-\x*log2(\x)-(1-\x)*log2(1-\x)};
\addlegendentry{$H_2(A)$}

\addplot [blau_2b, thick] {func(x)};
\addlegendentry{$\mathbb{C}_{DI}(\mathcal{W})$}
\end{axis}
\end{tikzpicture} 
\caption{The deterministic identification (DI) capacity of the BSC as a function of the input constraint $A$. The dashed red line indicates the binary entropy function, which is maximized in (\ref{eq:BSCcapacity}). The solid blue line indicates the DI capacity.}
\end{figure}
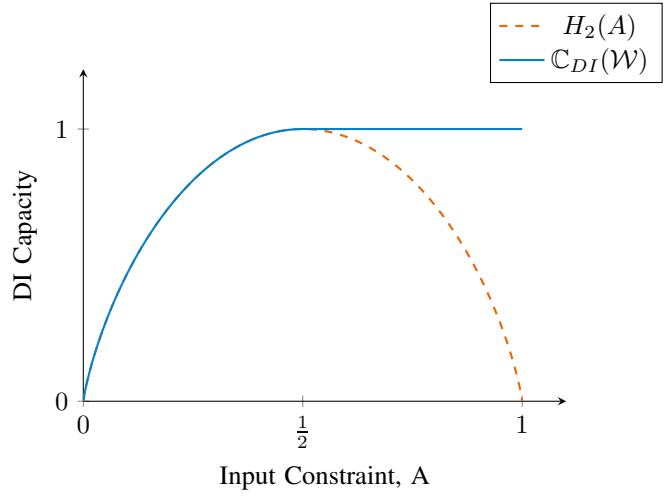 
\end{example}
\subsection{Achievability proof}
\label{Subsec.Achievability}
Consider a DMC $\W$. By Lemma~\ref{Lem.Reduction} we can assume without loss of generality that the channel matrix $W:\X\rightarrow \Y$ has distinct row vectors.
To prove achievability of the DI capacity, we combine methods and ideas from the work of J\'aJ\'a \cite{J85} as well as techniques by Ahlswede \cite{A80}. The analysis for the type II error is based on ideas that are analogous to the combinatoric analysis of Hamming distances in \cite{J85}.
The codebook construction is similar to that of Ahlswede's coding scheme  \cite{A80}, 
yet the decoder is significantly different. 
Nonetheless, the type-class intersection lemma and the message-set analysis in \cite{A80} are useful in our analysis for the type II error.

We extensively use the method of types \cite[Ch.~2]{CK82}. Here a brief review of the definitions for type classes and $\delta$-typical sets is given. The type $\hat{P}_{x^n}$ of a given sequence $x^n$ is defined as the empirical distribution $\hat{P}_{x^n}(a)=N(a|x^n)/n$ for $a\in\X$, where $N(a|x^n)$ is the number of occurrences of the symbol $a\in\X$ in the sequence $x^n$. The space of all types over $\X$ of sequences of length $n$ is denoted by $\P_n(\X)$. The $\delta$-typical set $\T_\delta(p_X)$ is defined as the set of sequences $x^n\in\X^n$ such that for every $a\in\X$: $|\hat{P}_{x^n}(a)-p_X(a)|\leq\delta$  if $p_X(a)>0$, and $\hat{P}_{x^n}(a)=0$ if $p_X(a)=0$.
A type class is denoted by $\T(\hat{P})=\{ x^n \,:\; \hat{P}_{x^n}=\hat{P} \}$.
Similarly, a joint type is denoted by $\hat{P}_{x^n,y^n}(a,b)=N(a,b|x^n,y^n)/n$ for $(a,b)\in\X\times\Y$, where $N(a,b|x^n,y^n)$ is the number of occurrences of the symbol pair $(a,b)$ in the sequence $(x_i,y_i)_{i=1}^n$, and as a conditional type by 
$\hat{P}_{y^n|x^n}(b|a)=N(a,b|x^n,y^n)/N(a|x^n)$.
The conditional $\delta$-typical  set $\T_\delta(p_{Y|X}|x^n)$ is defined as the set of sequences $y^n\in\Y^n$ such that for every $b\in\Y$: $|\hat{P}_{y^n|x^n}(b|a)-p_{Y|X}(b|a)|\leq\delta$  if $p_{X,Y}(a,b)>0$, and $p_{X,Y}(a,b)=0$ if $p_X(a)=0$.

\subsection*{The Codebook}
\label{Subsec.CodebookGeneration}

First, we show that there exists a code  such that the codewords are separated by a distance of $n\epsilon$. 
Let $p_X(x)$ be an input
distribution on $\X$, such that
\begin{equation}
    \label{Capacity}
    \mathbb{E} \left\{ \phi(X) \right\} = \sum_{x\in\X}
    p_X(x)\phi(x)
    \leq A-\epsilon'(\delta) \;,\,
\end{equation}
for $X\sim p_X(x)$, where $\epsilon'(\delta)\rightarrow 0$ as $\delta \to 0$. We may assume without loss of generality that $p_X$ is a type, due to the entropy continuity lemma \cite[Lem.~2.7]{CK82}. 
\begin{lemma}
    \label{Lem.DeterministicCodebook}
    Let $R<H(X)$. Then,
    %
    for sufficiently small $\epsilon\in (0,1)$ and sufficiently large $n$, there exists a codebook $\U^*=\{v_i \,,\; i\in\M\}$, which consists of $|\M|$ sequences in $\X^n$, such that the following hold:
    \begin{enumerate}
        \item 
        All the codewords belong to the type class $\T(p_X)$, namely 
        \begin{align}
            v_i \in \T(p_X) \text{ for all } i\in\M \;.\,
        \end{align}
        \item
        The codewords are distanced by $n\epsilon$, i.e.,
        \begin{align}
         \label{Eq.HammingdistanceProperty}
         d_H(v_i,v_j) \geq n\epsilon \text{ for all } i\neq j \;.\,
        \end{align}
        \item
        The codebook size is at least $\frac{1}{2}\cdot 2^{nR}$, that is, 
        $|\M|\geq 2^{n(R-\frac{1}{n})}$.
    \end{enumerate}
\end{lemma}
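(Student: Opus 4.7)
The plan is to build $\U^*$ by a greedy maximal packing inside the single type class $\T(p_X)$. I initialize a residual set $\R_0 = \T(p_X)$ and at each step $i$ pick any $v_i \in \R_{i-1}$, then update $\R_i = \R_{i-1}\setminus \S_\epsilon(v_i)$, terminating once the residual set is empty. By construction every chosen codeword lies in $\T(p_X)$, giving property~(1). Because $v_{i+1}\in \R_i$ was not removed by any previously placed ball, $d_H(v_i,v_j)\geq n\epsilon$ for all $i\neq j$, giving property~(2) essentially for free from the algorithm.

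For property~(3) I will invoke a volume/covering argument. Termination means $\T(p_X)\subseteq \bigcup_{i\in\M}\S_\epsilon(v_i)$, hence
\begin{align}
    |\M| \;\geq\; \frac{|\T(p_X)|}{\max_{v\in\X^n} |\S_\epsilon(v)|}.
\end{align}
The numerator is bounded below by the standard method-of-types estimate $|\T(p_X)|\geq (n+1)^{-|\X|}\,2^{nH(X)}$. For the denominator, a sequence at Hamming distance less than $n\epsilon$ from $v$ is specified by choosing a position-set of size at most $\lceil n\epsilon\rceil -1$ and assigning one of $|\X|-1$ alternative symbols at each chosen position, so
\begin{align}
    |\S_\epsilon(v)| \;=\; \sum_{k=0}^{\lceil n\epsilon\rceil-1}\binom{n}{k}(|\X|-1)^k \;\leq\; 2^{n\bigl(H_2(\epsilon)+\epsilon\log(|\X|-1)\bigr)},
\end{align}
where I use the familiar bound $\sum_{k\leq n\epsilon}\binom{n}{k}\leq 2^{nH_2(\epsilon)}$ valid for $\epsilon \leq 1/2$. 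Combining the two estimates gives
\begin{align}
    |\M| \;\geq\; (n+1)^{-|\X|}\cdot 2^{n\bigl(H(X)-H_2(\epsilon)-\epsilon\log(|\X|-1)\bigr)}.
\end{align}

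The main subtlety I anticipate is the interplay between $R$ and $\epsilon$: the exponent above dominates $R-1/n$ only when $\epsilon$ is small enough that $H_2(\epsilon)+\epsilon\log(|\X|-1)<H(X)-R$. Since the left-hand side of this condition vanishes as $\epsilon\to 0$ while the right-hand side is the fixed positive gap $H(X)-R>0$, such an $\epsilon$ always exists, and this is precisely the regime in which the lemma will be applied in the downstream achievability proof (where $\epsilon$ controls the enforced codeword separation and can be taken as small as needed after $R$ is fixed). Under this compatibility, for $n$ large enough the polynomial prefactor $(n+1)^{-|\X|}$ is absorbed into the slack in the exponent, yielding $|\M|\geq 2^{n(R-1/n)} = \tfrac12\cdot 2^{nR}$, as required. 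An alternative route via random codeword draws uniformly from $\T(p_X)$ followed by expurgation of colliding pairs would give the same conclusion, but the greedy construction above is self-contained and exposes property~(2) directly from the construction rather than from a probabilistic calculation.
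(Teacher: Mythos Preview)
Your proof is correct and takes a genuinely different route from the paper. The paper draws $M=2^{nR}$ codewords uniformly at random from $\T(p_X)$, defines indicators $A_i$ for the event that $U_i$ falls within the $\epsilon$-ball of some earlier $U_j$, and then uses a Bernstein-type concentration lemma (Lemma~\ref{Lem.BernsteinIneq}) together with the sphere-volume bound $|\S_\epsilon(x^n)|\leq 2^{n\theta(\epsilon)}$ to show $\Pr\bigl(|\widetilde{\M}|\leq \tfrac12 M\bigr)\leq 2^{-\alpha_2 2^{nR}}$, whence a good codebook exists. Your greedy Gilbert--Varshamov packing replaces this entire probabilistic machinery with a direct covering count: termination forces $\T(p_X)\subseteq \bigcup_i \S_\epsilon(v_i)$, so $|\M|\geq |\T(p_X)|/\max_v|\S_\epsilon(v)|$, and the same volume estimates finish the job. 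Both arguments require the identical compatibility condition between $R$ and $\epsilon$ (the paper imposes $R<H(X)-3\theta(\epsilon)$ with $\theta(\epsilon)=H_2(\epsilon)+\epsilon\log|\X|$), which you correctly flag. Your approach is shorter, fully deterministic, and exposes properties~(1)--(2) immediately; the paper's approach has the incidental benefit of a double-exponential tail bound on the failure probability, which would matter only if one later needed to union-bound this event against other random-coding events, but is not used here.
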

\begin{proof}[Proof of Lemma~\ref{Lem.DeterministicCodebook}]
Denote
\begin{align}
    M \triangleq 2^{nR} \;.\,
\end{align}
Let $U_1,...,U_M$ be independent random sequences, each uniformly distributed over the type class of 
$p_X$, i.e.,
\begin{align}
    \Pr \left( U_i = x^n \right) =
    \begin{cases}
        \frac{1}{\left| \T(p_X) \right|} & x^n \in \T(p_X) \;,\, \\
        0 & x^n \notin \T(p_X) \;.\,
    \end{cases}
\end{align}
Next, define a new collection of sequences $V_1,...,V_M$ as follows,
\begin{equation}
    \label{Eq.HammingDistanceProperty}
     V_i =
        \begin{cases}
            U_i & \quad \text{if $d_H(U_i,U_j) \geq n\epsilon \quad \forall i \neq j$} \;,\, \\
            \emptyset & \quad \text{otherwise} \;,\,
        \end{cases}
\end{equation}
where $d_H(\cdot,\cdot)$ denotes the Hamming distance, and $\emptyset$ represents an idle sequence of no interest. The assignment $V_i = \emptyset$ is interpreted as ``dropping the $i$th word $U_i$."
Consider the following message set, 
\begin{equation}
\label{Eq.OperationalCodebook}
\widetilde{\mathcal{M}} = \left\{ i \, : \, V_i \neq \emptyset, i \in [\![M]\!] \right\} \;,\,
\end{equation}
corresponding to words that were not dropped, where we use the notation 
$\widetilde{\M}$ to indicate that the set is random.

We show that even though we removed words from the original collection $\{U_i\}_{i \in [\![M]\!]}$ (of size $M$), the rate decrease can be made negligible. Following the lines of \cite{A80}, we derive an upper-bound on $\Pr( |\widetilde{\M}| \leq \frac{1}{2}M)$ where $\widetilde{\M}$ defined in (\ref{Eq.OperationalCodebook}) is the operational message set. To this end, we will use the following concentration lemma,
\begin{lemma}[also in {\cite{A80}}]
\label{Lem.BernsteinIneq}
Let $A_1,\ldots,A_K$ be a sequence of discrete random variables. Then,
\begin{align}
    \Pr \left( \frac{1}{K} \sum_{i=1}^K A_i \geq c \right) \leq 2^{- cK} \prod_{i=1}^K \underset{a^{i-1}}{\max~} \mathbb{E} \left( 2^{A_i} \, \big| \, A^{i-1} = a^{i-1} \right) \;.\,
\end{align}
\end{lemma}

Now, define an indicator for dropping the $i$th word by
 \begin{equation}
   \hat{V_i} =
   \begin{cases}
          1 & V_i = \emptyset \;,\, \\
          0 & V_i \neq \emptyset \,,\,
   \end{cases}
 \end{equation}
 and notice the equivalence between the following events,
 \begin{equation}
     \left\{\left| \widetilde{\M} \right| \leq \frac{1}{2} M\right\} = \left\{\sum_{i=1}^M \hat{V_i} > \frac{1}{2} M \right\} \;.\,
 \end{equation}
Observe that
 $\hat{V_i} = 1$, if and only if $U_i$ is inside an $\epsilon$-sphere of some other $U_j$. Namely, $\hat{V_i} = 1$ iff $U_i \in \underset{j \neq i}{\bigcup} \S_{\epsilon}(U_j)$.
 The selection of codewords can be viewed as an iterative procedure. Specifically,
define
  \begin{equation}
   A_i =
   \begin{cases}
          1 & U_i \in \underset{j < i}{\bigcup} \S_{\epsilon}(U_j) \;,\, \\
          0 & \text{otherwise} \;,\,
   \end{cases}
   \label{Eq.Ai}
 \end{equation}
 \begin{equation}
   B_i =
   \begin{cases}
          1 & U_i \in \underset{j > i}{\bigcup} \S_{\epsilon}(U_j) \;,\, \\
          0 & \text{otherwise} \;.\,
   \end{cases}
 \end{equation}
Now, since $\hat{V}_i = 1$ implies that either $A_i=1$ or $B_i=1$, it follows that the number of dropped messages is bounded by
 \begin{align}
     M - \left| \widetilde{\M} \right| & = \sum_{i=1}^M \hat{V}_i 
     \nonumber\\
     & \leq \sum_{i=1}^M A_i + \sum_{i=1}^M B_i \;.\,
     \label{Eq.DtMAB}
 \end{align}
Consider the event that
 \begin{align}
 \sum_{i=1}^M \hat{V}_i > \frac{1}{2} M \;.\,
 \label{Eq.DhVhalfM}
 \end{align}
If this holds, then the two sums in the right hand side of (\ref{Eq.DtMAB}) cannot be smaller than $\frac{1}{4} M$ together, that is, either 
$\sum_{i=1}^M A_i\geq\frac{1}{4} M$, or $\sum_{i=1}^M B_i\geq\frac{1}{4} M$, or both. Hence,
\begin{align}
   \left\{ \sum_{i=1}^M \hat{V}_i > \frac{1}{2} M \right\} \subseteq 
    \left\{\sum_{i=1}^M A_i \geq \frac{1}{4} M \right\} \cup
    \left\{\sum_{i=1}^M B_i \geq \frac{1}{4} M \right\} \;,\,
\end{align}
and by the union bound,
 \begin{align}
 \Pr\Big(\sum_{i=1}^M \hat{V_i} > \frac{1}{2} M \Big) &\leq \Pr\Big(\sum_{i=1}^M A_i \geq \frac{1}{4} M \Big) + \Pr\Big(\sum_{i=1}^M B_i \geq \frac{1}{4} M \Big)
 \nonumber\\
 &=2\Pr\Big(\sum_{i=1}^M A_i \geq \frac{1}{4} M \Big) \;,\,
 \end{align}
where the last line follows by symmetry, as the random variables $\bar{A}=\sum_{i=1}^M A_i$ and $\bar{B}=\sum_{i=1}^M B_i$ have the same probability distribution.
 
Next we apply Lemma~\ref{Lem.BernsteinIneq},
\begin{align}
 \Pr\Big(\sum_{i=1}^{M} A_i \geq \frac{1}{4}M\Big) &\leq 2^{-\frac{1}{4}M} \prod_{i=1}^M \underset{a^{i-1}}{\max}~ \mathbb{E} \left( 2^{A_i}|A^{i-1}=a^{i-1} \right) 
         \;.\,
     \label{Eq.DHammingUnionMeasure0}
     \end{align}
Consider the conditional expectation above.
Using the law of total expectation, we can add conditioning on $U^{i-1}$ as well, i.e.
\begin{align}
    \mathbb{E} \left( 2^{A_i}|A^{i-1}=a^{i-1} \right)
     &=\sum_{u^{i-1}} \Pr( U^{i-1}=u^{i-1} |A^{i-1}=a^{i-1})\cdot \mathbb{E}(2^{A_i}|U^{i-1}=u^{i-1},A^{i-1}=a^{i-1})
     \nonumber\\
     &=\sum_{u^{i-1}} \Pr( U^{i-1}=u^{i-1} |A^{i-1}=a^{i-1})\cdot \mathbb{E}(2^{A_i}|U^{i-1}=u^{i-1})
     \nonumber\\
     &\leq\max_{u^{i-1}} \mathbb{E}(2^{A_i}|U^{i-1}=u^{i-1}) \;,\,
     \label{Eq.DHammingUnionMeasure1}
\end{align}
where the second equality holds since $A_i$, is a deterministic function of $U^{i-1}$ (see (\ref{Eq.Ai})).
Hence, by (\ref{Eq.DHammingUnionMeasure0})-(\ref{Eq.DHammingUnionMeasure1}),
%
%
\begin{align}
 \Pr\Big(\sum_{i=1}^{M} A_i \geq \frac{1}{4}M\Big) &\leq 2^{-\frac{1}{4}M} \prod_{i=1}^M \underset{u^{i-1}}{\max}~ \mathbb{E} \left( 2^{A_i}|U^{i-1}=u^{i-1} \right) 
        \nonumber \\
         &= 2^{-\frac{1}{4}M} \prod_{i=1}^M \underset{u^{i-1}}{\max}~\Big( 1\cdot\Pr\Big\{A_i= 0|U^{i-1}=u^{i-1}\Big\} + 2\cdot\Pr\Big\{A_i= 1|U^{i-1}=u^{i-1}\Big\}\Big) 
         \nonumber\\
         &\leq
         2^{-\frac{1}{4}M} \prod_{i=1}^M \Big( 1 + 2\cdot \underset{u^{i-1}}{\max}~\Pr(A_i = 1 | U^{i-1}=u^{i-1})\Big) \;.\,
     \label{Eq.DHammingUnionMeasure}
     \end{align}

We bound the probability term $\Pr(A_i = 1 | U^{i-1}=u^{i-1})$, as follows. 
For a Hamming sphere of radius $n\epsilon$, 
\begin{align}
    \left| S_\epsilon(x^n) \right| \leq \binom{n}{n\epsilon}\cdot |\X|^{n\epsilon}  \leq 2^{n\theta(\epsilon)} \;,\,
\end{align}
for sufficiently large $n$, where 
\begin{align}
    \theta(\epsilon)=H_2(\epsilon) + \epsilon\log |\X| \;,\,
\end{align}
tends to zero as $\epsilon\to 0$. 
The first inequality holds by a simple combinatoric argument. Namely, counting the number of sequences with up to 
$n\epsilon$ different entries compared to a given $x^n$, we have $\binom{n}{n\epsilon}$ optional choices for the locations of those entries, and $|\X|$ possible values for each of those entries.
The last inequality follows from  Stirling's approximation \cite[Example 11.1.3]{Cover}.
Hence,
\begin{align}
    \left|\bigcup_{j=1}^M \S_{\epsilon}(u_j) \right| & \leq M 2^{n\theta(\epsilon)}
    \nonumber\\
    & = 2^{n \left( R + \theta(\epsilon) \right)} \;,\,
    \label{Eq.DSphereVolume}
\end{align}
for every given collection of sequences, $u_1,\ldots,u_M\in\T(p_X)$.
Consider a random sequence $\bar{X}^n$ that is uniformly distributed over the type class $\T(p_X)$, and statistically independent of $U_1,\ldots,U_M$. We use this external sequence as an auxiliary in the derivation below.
Then, 
\begin{align}
    \Pr\left( A_i = 1 \, \big| \, U^{i-1}=u^{i-1} \right)
    &=
    \Pr\left(U_i \in \bigcup_{j<i} \S_{\epsilon}(u_j) \right)
    \nonumber\\
    &=
   \Pr\left(\bar{X}^n \in \bigcup_{j<i} \S_{\epsilon}(u_j) \right)
   \nonumber\\
   &\leq 
   \Pr \left\{ \bar{X}^n\in \bigcup_{j=1}^M \S_{\epsilon}(u_j)\right\} \;.\,
    \label{Eq.PrAiUim1}
\end{align}
The first equality follows from the definition of $A_i$ in (\ref{Eq.Ai}) and because $U_1,\ldots,U_M$ are statistically independent. The second equality holds because $U_i$ and $\bar{X}^n$ are both uniformly distributed over the type class of $p_X$. The inequality follows as $\Pr(\F_1)\leq\Pr(\F_1\cup\F_2)$ for every pair $\F_1$, $F_2$ of probabilistic events.
Since  $\bar{X}^n$ is uniformly distributed over $\T(p_X)$, we have
\begin{align}
    \Pr \left\{\bar{X}^n\in \bigcup_{j=1}^M \S_{\epsilon}(u_j)\right\}
    & = 
    \sum_{x^n\in \T(p_X )~\cap~ \bigcup_{j=1}^M \S_{\epsilon}(u_j) }  \frac{1}{\left| \T(p_X ) \right|}
    \nonumber\\
    & =
    \frac{1}{\left| \T(p_X ) \right|} \cdot \left| \T(p_X )~\cap~ \bigcup_{j=1}^M \S_{\epsilon}(u_j) \right| 
    \nonumber\\
    &\leq 
    \frac{2^{n(R+ \theta(\epsilon))}}{\left| \T(p_X ) \right|}
    \nonumber\\
    &\leq 
    (n+1)^{\left| \X \right|}\cdot \frac{2^{n(R+ \theta(\epsilon))}}{2^{nH(X)}}
    \nonumber\\
    &\leq 
    2^{-n(H(X)-R- 2\theta(\epsilon))} \;,\,
    \label{Eq.PrAiUim1H}
\end{align}
for sufficiently large $n$,
where the first inequality follows from (\ref{Eq.DSphereVolume}), and the second is due to standard type class properties \cite[Th.~11.1.3]{Cover}.
 The last expression tends to zero as $n\rightarrow\infty$, provided that
\begin{align}
    R < H(X)-3\theta(\epsilon) \;.\,
\end{align}
Together with (\ref{Eq.PrAiUim1})-(\ref{Eq.PrAiUim1H}), this implies 
\begin{align}
    \label{Ineq.Result}
    \Pr \left( A_i = 1 \, \big| \, U^{i-1} = u^{i-1} \right) \leq 2^{-n\theta(\epsilon)} \;.\,
\end{align}
Now plugging (\ref{Ineq.Result}) into (\ref{Eq.DHammingUnionMeasure}) yields
     \begin{align}
         \Pr \left( \sum_{i=1}^{M} A_i \geq \frac{1}{4}M \right) & \leq  2^{-\frac{1}{4}M} \left( 1 + 2\cdot2^{-n\theta(\epsilon)} \right)^M
         \nonumber\\
         & = \left( 2^{-\frac{1}{4}} + 2^{\frac{3}{4}}\cdot2^{-n\theta(\epsilon)} \right)^M \;,\,
     \end{align}
for sufficiently large $n$, we have $2^{\frac{3}{4}}\cdot 2^{-n\theta(\epsilon)}\leq 2^{-5}$ hence,
 \begin{align}
    2^{-\frac{1}{4}} + 2^{\frac{3}{4}} \cdot 2^{-n\theta(\epsilon)} & \leq 2^{-\frac{1}{4}}+2^{-5}
    \nonumber\\
    & = 0.8721 
    \nonumber\\
    & < 1 \;.\,
 \end{align}
Thus we have a double exponential bound
 \begin{align}
    \label{Ineq.DRateError}
     \Pr\left( \left| \widetilde{\M} \right| \leq \frac{1}{2}M \right) & \leq
     2^{-\alpha_1 M} 
     \nonumber\\
     & = 2^{-\alpha_1 2^{nR}} \;,\,
 \end{align}
for some $\alpha_1>0$. We deduce that there exists at least one codebook with the desired properties. This completes the proof of Lemma~\ref{Lem.DeterministicCodebook}.
\end{proof}
We continue to the main part of the achievability proof.
Let $\U^*=\{ v_i \,,\; i\in\M \}$ be a codebook of size $2^{n(R-\frac{1}{n})}$ as in Lemma~\ref{Lem.DeterministicCodebook}.
Consider the following DI coding scheme for $\W$.

\subsubsection*{Encoding}
Given a message $i\in\M$ at the sender, transmit $x^n=v_i$.

\subsubsection*{Decoding}
Let $\delta > 0$, such that $\delta\rightarrow 0$ as $\epsilon\rightarrow 0$. Let $j\in\M$ be the message that the decoder wishes to identify. To do so, the decoder
checks whether the channel output $y^n$ belongs to the corresponding decoding set $\D_j$ or not, where
\begin{align}
    \D_j = \left\{ y^n \;:\, (v_j,y^n) \in \T_{\delta}(p_X W) \right\} \;.\,
    \label{Eq.DjTypicality}
\end{align}
Namely, given the channel output $y^n\in\Y^n$, if
$(v_j,y^n) \in \T_{\delta}(p_XW)$,
then the decoder declares that the message $j$ was sent.
On the other hand, if $(v_j,y^n) \notin \T_{\delta}(p_X W)$, it declares that $j$ was not sent.
\subsection*{Error analysis}
\label{Subsec.AveErrorAnalysis}

First, consider the error of type I, i.e., the event that $Y^n\notin \D_i$.
For every $i\in\M$,
the probability of identification error of type I, 
$
P_{e,1}(i) =\Pr((v_i,Y^n) \notin \T_{\delta}(p_X W))
     \label{Eq.DTypeIUnion}
$
tends to zero by standard type class considerations 
\cite[Th.~1.2]{K08}. 

We move to the error of type II, i.e., when $Y^n\in \D_j$ for $j\neq i$. 
To bound the probability of error $P_{e,2}(i,j)$, we use the conditional type-class intersection lemma, due to Ahlswede \cite{A80}, as stated below.
\begin{lemma}[see {\cite[Lem.~$I_1$]{A80}}]
\label{Lem.GenSeqInter}
Let $W:\X\rightarrow\Y$ be a channel  matrix of a DMC $\W$ with distinct rows. Then, for every $x^n,x'^n\in\T_\delta(p_X)$ with $d_H(x^n,x'^n)\geq n\epsilon$,
\begin{align}
    \frac{|\T_{\delta}(p_{Y|X}|x^n) \cap \T_{\delta}(p_{Y|X}|x'^n)|}{|\T_{\delta}(p_{Y|X}|x^n)|} \leq 2^{-nL(\epsilon)} \;,\,
\end{align}
with $p_{Y|X}\equiv W$,
for sufficiently large $n$ and some positive 
function $L(\epsilon) > 0$ which is independent of $n$.
\end{lemma}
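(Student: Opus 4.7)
The plan is to bound the intersection by converting the cardinality ratio into a probability under $W^n(\cdot|x^n)$ and then controlling that probability through the method of types. Intuitively, sampling $Y^n \sim W^n(\cdot|x^n)$ places $Y^n$ in $\T_\delta(W|x^n)$ with probability close to one, so the ratio is essentially $W^n\bigl(\T_\delta(W|x^n)\cap\T_\delta(W|x'^n)\mid x^n\bigr)$, and the task reduces to showing that $Y^n$ is unlikely to also be conditionally typical with respect to the shifted input $x'^n$.

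First I would use the standard type estimates $|\T_\delta(W|x^n)| \doteq 2^{nH(Y|X)}$ and $W^n(y^n|x^n) \doteq 2^{-nH(Y|X)}$ for $y^n\in\T_\delta(W|x^n)$, which reduce the ratio (up to a polynomial factor in $n$) to $W^n\bigl(\T_\delta(W|x'^n) \mid x^n\bigr)$. I would then partition $\T_\delta(W|x'^n)$ by the joint type $Q_{XX'Y}$, where $Q_{XX'}$ is the (fixed) joint type of the pair $(x^n,x'^n)$, which by hypothesis satisfies $\sum_{a\neq a'} Q_{XX'}(a,a') \geq \epsilon$. For each admissible joint type $Q$, the count of $y^n$ with that joint type is at most $2^{nH_Q(Y\mid X,X')}$, while $W^n(y^n|x^n) = 2^{-n[H_Q(Y|X) + D(Q_{Y|X}\,\|\,W\mid Q_X)]}$. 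Combining, the contribution of each type class is upper bounded by $2^{-n[\,I_Q(Y;X'|X) + D(Q_{Y|X}\,\|\,W\mid Q_X)\,]}$, and since there are only polynomially many joint types, summing preserves the exponential rate.

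The crux is to show that the exponent $\Lambda(Q) := I_Q(Y;X'|X) + D(Q_{Y|X}\,\|\,W\mid Q_X)$ is bounded below by a constant $L(\epsilon) > 0$ uniformly over all feasible $Q$. Setting $\Lambda(Q) = 0$ forces simultaneously $Q_{Y|X} = W$ and $Y\perp X' \mid X$; combined with the typicality constraint $Q_{X'Y} \approx p_X\otimes W$ imposed by membership in $\T_\delta(W|x'^n)$, this yields the mixture identity $W(\cdot|a') = \sum_a Q_{X|X'}(a|a')\,W(\cdot|a)$ for every $a'$. Together with the distance condition $\sum_{a\neq a'}Q_{XX'}(a,a')\geq\epsilon$, distinct rows rule out the diagonal solution $Q_{X|X'}(a|a') = \mathbf{1}\{a=a'\}$, so the mixture identity fails at some $(a',b)$ with a gap that, by continuity on the compact polytope of joint types, can be lower bounded by a constant depending only on $\epsilon$ and $W$. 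A Pinsker-type bound on $D(Q_{Y|X}\,\|\,W\mid Q_X)$ then converts this gap into the desired $L(\epsilon) > 0$.

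The hard part is the last step: distinct rows by themselves do not preclude $W(\cdot|a')$ from being a convex combination of other rows, so the uniform positivity of $\Lambda(Q)$ on the feasible polytope must be extracted by a compactness and continuity argument rather than by a direct linear-algebraic obstruction. Concretely, I would argue that the set of degenerate $Q_{X|X'}$ satisfying the mixture identity forms a closed subset of the simplex, disjoint from the closed set cut out by the distance constraint $\sum_{a\neq a'}Q_{XX'}(a,a')\geq\epsilon$ (since degeneracy forces the diagonal), and invoke compactness to obtain a strictly positive distance, which translates via continuity of $\Lambda$ into the constant $L(\epsilon)$.
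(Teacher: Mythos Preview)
The paper itself does not prove this lemma; it is cited wholesale from Ahlswede \cite{A80}. So there is no in-paper proof to compare against, and your method-of-types reduction to an exponent $\Lambda(Q)=I_Q(Y;X'\mid X)+D(Q_{Y|X}\Vert W\mid Q_X)$ over admissible joint types is a reasonable route.

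There is, however, a real gap in your ``hard part.'' You correctly note that distinct rows do not rule out one row being a convex combination of others, so the mixture identity $W(\cdot\mid a')=\sum_a Q_{X|X'}(a\mid a')W(\cdot\mid a)$ may admit non-diagonal solutions. You then assert that the set of degenerate $Q_{X|X'}$ is disjoint from the distance-constraint set ``since degeneracy forces the diagonal,'' but that parenthetical is precisely the claim you just conceded is not obvious, and compactness alone cannot manufacture it. Without a proof that $\Lambda(Q)=0$ is incompatible with the off-diagonal mass constraint, the minimization could in principle hit zero and the exponent would vanish.

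The missing ingredient is to exploit the \emph{symmetry} coming from the hypothesis $x^n,x'^n\in\T_\delta(p_X)$, which gives $Q_X\approx Q_{X'}\approx p_X$. At a zero of $\Lambda$ you have $Q_{Y|X}=W$, and the typicality constraint on $x'^n$ gives $Q_{Y|X'}\approx W$ as well, so the pairs $(X,Y)$ and $(X',Y)$ have (approximately) the \emph{same} joint law $p_X\times W$ and hence $I_Q(X;Y)=I_Q(X';Y)$. Now use the chain rule twice:
\[
I_Q(X,X';Y)=I_Q(X;Y)+\underbrace{I_Q(X';Y\mid X)}_{=0}=I_Q(X';Y)+I_Q(X;Y\mid X'),
\]
whence $I_Q(X;Y\mid X')=0$ as well. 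The double conditional independence then forces $Q_{Y|XX'}(\cdot\mid x,x')=W(\cdot\mid x)=W(\cdot\mid x')$ for every $(x,x')$ in the support of $Q_{XX'}$, and distinctness of rows yields $x=x'$, i.e., $Q_{XX'}$ is diagonal, contradicting $\sum_{a\neq a'}Q_{XX'}(a,a')\geq\epsilon$. With this in hand, your continuity/compactness step becomes legitimate: for $\delta$ small relative to $\epsilon$, $I_Q(X;Y\mid X')$ is forced to be small rather than zero, which bounds the off-diagonal mass by $O(\delta)/\min_{a\neq a'}D(W(\cdot\mid a)\Vert W(\cdot\mid a'))$ and produces the uniform lower bound $L(\epsilon)>0$.
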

Now, for short notation, denote the conditional $\delta$-typical set in $\Y^n$, given $x^n\in\T(p_X)$, by 
\begin{align}
    \G(x^n) & \equiv \T_\delta(W|x^n)
    \nonumber\\
    & = \left\{ y^n \;:\, (x^n,y^n) \in  \T_{\delta}(p_X W) \right\} \;.\,
\end{align}
Then, for every $i\neq j$,
\begin{align}
    P_{e,2}(i,j) &=\Pr(\D_j|x^n=v_i)
    \nonumber\\
    &=
    \sum_{y^n\in\G(v_j)} W^n(y^n|v_i)
    \nonumber\\
    &=\sum_{y^n \in \G(v_j) \cap \G(v_i)} W^n(y^n|v_i)
    + \sum_{y^n \in \G(v_j) \cap (\G(v_i))^c} W^n(y^n|v_i) \;.\,
    \label{Eq.D2TypeIIError}
\end{align}
Observe that the second sum in the last line is bounded by the probability
$\Pr(Y^n\notin \T_\delta(W|v_i) |x^n=v_i)$, which in turn is bounded by $ 2^{-\alpha_1(\delta)n}$ as before, and tends to zero as well.

To bound the first sum in 
(\ref{Eq.D2TypeIIError}), we first consider the cardinality of the set that the sum acts upon (the domain). We note that since $v_i$ and $v_j$ belong to the type class $\T(p_X)$ by the first property of Lemma~\ref{Lem.DeterministicCodebook}, it follows that they also belong to the $\delta$-typical set, i.e., $v_i,v_j\in\T_{\delta}(p_X)$. Further, according to the second property of Lemma~\ref{Lem.DeterministicCodebook}, every pair of codewords $v_i$ and $v_j$ satisfy $d_H(v_i,v_j)\geq n\epsilon $. Finally, having assumed that the rows of $W$ are distinct, we have by Lemma~\ref{Lem.GenSeqInter},
\begin{align}
    |\G(v_j) \cap \G(v_i)| &\leq 2^{-nL(\epsilon)}|\G(v_j)| \nonumber \\
    &\leq 2^{n[H(Y|X)-L(\epsilon)]} \;,\,
    \label{Eq.DIntersectionBound}
\end{align}
where $X\sim p_X$, as we explained below. The second inequality in (\ref{Eq.DIntersectionBound}) holds since the size of the conditional type class $\G(x^n)=\T_\delta(W|x^n)$ is bounded by $2^{nH(Y|X)}$ \cite[Lem.~2.5]{CK82}, as the type of $v_i$ and $v_j$ is $p_X$.
Furthermore, by standard type class properties \cite[Th.~1.2]{K08},
\begin{align}
    W^n(y^n|v_i) 
    &\leq 2^{-n[H(Y|X)-\delta\log|\Y|]} \;.\,
    \label{Eq.DConditionalProbBound}
\end{align}
Now by Equation~(\ref{Eq.DIntersectionBound}) and (\ref{Eq.DConditionalProbBound}),
\begin{align}
    \sum_{y^n \in \G(v_j) \cap \G(v_i)} W^n(y^n|v_i) \leq 2^{-n[L(\epsilon)-\delta\log|\Y|]} \;,\,
    \label{Eq.DConditionalProbBound2}
\end{align}
which tends to zero as $n\rightarrow\infty$ for sufficiently small $\delta>0$, such that $\delta\log|\Y|<L(\epsilon)$.
Thus, by  (\ref{Eq.D2TypeIIError}) and (\ref{Eq.DConditionalProbBound2}), the probability of type II error is bounded by
\begin{align}
\label{Ineq.DUppExpectedTypeIIError}
P_{e,2}(i,j) \leq 2^{-n\alpha_2(\epsilon,\delta)} \;,\,
\end{align}
for sufficiently large $n$, where $\alpha_2(\epsilon,\delta) = \min\{\alpha_1(\delta),L(\epsilon)-\delta\log|\Y|\}$.
The proof follows by taking the limits $n\rightarrow\infty$, and $\epsilon$, $\delta\rightarrow 0$.
\subsection{Converse Proof}
\label{Subsec.Converse}
To prove the converse part, we will use the  following observation. Let $R>0$ be an achievable rate. 
We will assume to the contrary that there exist two different messages $i_1$ and $i_2$ that are represented by the same codeword, i.e., $u_{i_1}=u_{i_2}=x^n$, and show that this leads to error probabilities such that
\begin{align}
    P_{e,1}(i_1) + P_{e,2}(i_2,i_1)= 1 \;.\,
\end{align}
Hence the assumption is false. The number of messages $2^{nR}$ is thus bounded by the size of the subset of input sequences that satisfy the input constraint $\phi^n(x^n)\leq A$.
Then we notice that the average cost of a codeword depends only on its type, and hence this subset is in fact a union of type classes. This also implies that we have a strong converse for the DI capacity.

Consider a sequence of $(2^{nR},n,\lambda_1^{(n)},\lambda_2^{(n)})$ codes $(\U^{(n)},\D^{(n)})$ such that
$\lambda_1^{(n)}$ and $\lambda_2^{(n)}$ tend to zero as $n\rightarrow\infty$.
\begin{lemma}
\label{Lem.DConverse}
Consider a sequence of codes  as described above. 
Then, given a sufficiently large $n$, the codebook 
$\U^{(n)}$ satisfies the following property.
There cannot be two distinct messages that are represented by the same codeword, i.e.,
\begin{align}
    i_1 \neq i_2 \; \quad \Rightarrow\quad u_{i_1} \neq u_{i_2} \;,\,
\end{align}
where $i_1,i_2\in[\![2^{nR}]\!]$.
\end{lemma}
\begin{proof}
    Assume to the contrary that 
    there exist two messages $i_1$ and $i_2$, where
    $i_1\neq i_2$, such that
    \begin{align}
        u_{i_1}=u_{i_2}=x^n \;,\,
    \end{align}
    for some $x^n\in\X^n$. Since $(\U^{(n)},\D^{(n)})$ form a $(2^{nR},n,\lambda_1^{(n)},\lambda_2^{(n)})$ code, we have
    \begin{align}
        P_{e,1}(i_1) = W^n(\D_{i_1}^c|x^n) & \leq \lambda_1^{(n)} 
        \nonumber\\
        P_{e,2}(i_2,i_1) = W^n(\D_{i_1}|x^n) & \leq \lambda_2^{(n)} \;.\,
    \end{align}
This leads to a contradiction as
    \begin{align}
        1 & = W^n(\D_{i_1}^c|x^n) + W^n(\D_{i_1}|x^n)
        \nonumber\\
        & = P_{e,1}(i_1) + P_{e,2}(i_2,i_1)
        \nonumber\\
        & \leq \lambda_1^{(n)} + \lambda_2^{(n)} \;.\,
    \end{align}
Hence, the assumption is false, and $i_1$ and $i_2$ cannot have the same codeword.
\end{proof}
By Lemma~\ref{Lem.DConverse}, each message has a distinct codeword. Hence, the number of messages is bounded by the number of input sequences that satisfy the input constraint. That is, the size of the codebook is upper-bounded as follows:
\begin{align}
    \label{Ineq.DConverse}
    2^{nR} \leq \left| \left\{x^n \;:\, \frac{1}{n} \sum_{t=1}^n
     \phi(x_t) \leq A\right\} \right| \;.\,
\end{align}
Notice that the input cost of a given sequence $x^n$ depends only on the type of the sequence, since
\begin{align}
    \frac{1}{n} \sum_{t=1}^n
    \phi(x_t) & = \sum_{a\in\X} \hat{P}_{x^n}(a)\phi(a)
    \nonumber\\
    & = \mathbb{E}\big\{\phi(X')\big\} \;,\,
\end{align}
where the random variable $X'$ is  distributed according to the type of $x^n$, i.e., $p_{X'}= \hat{P}_{x^n}$. Therefore, the subset on the right hand side of (\ref{Ineq.DConverse}) can be written as a union of type classes:
\begin{align}
     \left| \left\{x^n: \frac{1}{n} \sum_{t=1}^n
     \phi(x_t) \leq A \right\} \right| &= \;
     \Bigg|\bigcup_{\substack{p_{X'}\in\P_n(\X):\\\mathbb{E}\{\phi(X')\} \leq A}
     } \T(p_{X'})\Bigg|
     \nonumber\\
     &\leq \; \left| \P_n(\X) \right|
     \max_{\substack{p_{X'}\in\P_n(\X):\\\mathbb{E}\{\phi(X')\} \leq A}} \left| \T(p_{X'}) \right|
     \nonumber\\
     &\leq \left| \P_n(\X) \right| \cdot 2^{n H(X')}
     \nonumber\\
     &\leq 2^{n \left( H(X') + \alpha_n \right)}
     \nonumber\\
     & \leq 
     2^{n \left( \mathsf{C}_{DI}(\W) + \alpha_n \right)} \;,\,
     \label{Ineq.DEmpricalDis}
\end{align}
where $\alpha_n\rightarrow 0$ as $n\rightarrow\infty$,
where $\P_n(\X)$ denotes the space of all types over $\X$ of sequences of length $n$. 
The second inequality holds since  the size of a type class $\T(p_{X'})$ is bounded by $|\T(p_{X'})|\leq
2^{nH(X')}$ \cite[Th.~11.1.3]{Cover}. The third inequality holds since  the number of types on $\X$ is polynomial in $n$ \cite[Th.~11.1.1]{Cover}. Thus, by (\ref{Ineq.DConverse}) and (\ref{Ineq.DEmpricalDis}), the code rate is bounded by $R\leq \mathsf{C}_{DI}(\W)+\alpha_n$, which completes the proof of Theorem~\ref{Th.DDICapacity}.
\qed
\section{The Gaussian Channel}
\label{Sec.GaussianChannel}
In this section, we consider the Gaussian channel $\mathscr{G}$, specified by the input-output relation
\begin{align}
    \mathbf{Y}=\mathbf{x}+\mathbf{Z} \;.\,
\end{align}
with additive white Gaussian noise, i.e.,
when the noise sequence $\mathbf{Z}$ is i.i.d. $\sim \mathcal{N}(0,\sigma^2)$.  The transmission power is limited to $\norm{\textbf{x}}^2\leq nA$.
\begin{figure}[htb]
    \centering
	\tikzstyle{farbverlauf} = [ top color=white, bottom color=white!80!gray]
\tikzstyle{block1} = [draw,top color=white, bottom color=blue!20!white, rectangle, rounded corners,
minimum height=2em, minimum width=2.5em]

\tikzstyle{block2} = [draw,top color=white, bottom color=white!80!blue, rectangle, rounded corners,
minimum height=2em, minimum width=2.5em]

\tikzstyle{input} = [coordinate]
\tikzstyle{sum} = [draw, circle,inner sep=0pt, minimum size=5mm,  thick]
\tikzstyle{arrow}=[draw,->] 
\begin{tikzpicture}[auto, node distance=2cm,>=latex']
\node[] (M) {$i$};
\node[block1,right=.5cm of M] (enc) {Encoder};
\node[sum, right=1cm of enc] (channel) {$+$};
\node[block2, right=1cm of channel] (dec) {Decoder};
\node[below=.5cm of dec] (Target) {$j$};
\node[right=.5cm of dec] (Output) {$\text{\small Yes/No}$};


\node[above=.7cm of channel] (noise) {$\textbf{Z}$};
\draw[->] (M) -- (enc);
\draw[->] (enc) --node[above]{$\textbf{u}_i$} (channel);
\draw[->] (noise) -- (channel);
\draw[->] (channel) --node[above]{$\textbf{Y}$} (dec);


\draw[->] (dec) -- (Output);
\draw[->] (Target) -- (dec);

\end{tikzpicture}
	\caption{Deterministic identification for the standard Gaussian channel}
	\label{Fig.GaussianChannel}
\end{figure}
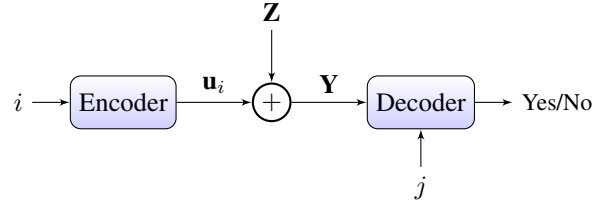
\subsection{Coding for the Gaussian Channel}
The definition of a DI code for the Gaussian channel is given below.
\begin{definition}[Gaussian DI Code]
\label{GdeterministicIDCode}
A $(2^{nR},n)$ DI code for a Gaussian channel $\sG$ under input constraint $A$, assuming $2^{nR}$ is an integer, is defined as a system $(\U,\mathscr{D})$ consisting of a codebook $\U=\{ \mathbf{u}_i \}_{i\in[\![2^{nR}]\!]}$, $\U\subset \X^n$, such that
\begin{align} 
    \norm{\mathbf{u}_i}^2 \leq nA \;,\,
\end{align}
for all $i\in[\![2^{nR}]\!]$ and a collection of decoding regions
$\mathscr{D}=\{ \D_i \}_{i\in[\![2^{nR}]\!]}$
with
\begin{align}
    \bigcup_{i=1}^{2^{nR}} \D_i \subset \mathbb{R}^n \;.\,
\end{align}
Given a message $i\in [\![2^{nR}]\!]$, the encoder transmits $\mathbf{u}_i$. The decoder's aim is to answer the following question: Was a desired message $j$ sent or not? There are two types of errors that may occur: 
Rejecting of the true message, or accepting a false message. Those are referred to as type I and type II errors, respectively.

The error probabilities of the identification code $(\U,\mathscr{D})$ are given by
\begin{align}
 P_{e,1}(i)&= 1-\int_{\D_i} f_{\mathbf{Z}}(\mathbf{y}-\mathbf{u}_i)\, d\mathbf{y}
 &&\hspace{-2cm}  \text{correctness property}
 \;,\,
 \label{Eq.GTypeIErrorDef}
 \\
 P_{e,2}(i,j)&= \int_{\D_j} f_{\mathbf{Z}}(\mathbf{y}-\mathbf{u}_i) \, d\mathbf{y} &&\hspace{-2cm} \text{disjointedness property}
 \;.\,
 \label{Eq.GTypeIIErrorDef}
\end{align}
with the noise formula given by
\begin{align}
   f_{\mathbf{Z}}(\mathbf{z})=\frac{1}{(2\pi\sigma^2)^{n/2}}
e^{-\norm{\mathbf{z}}^2/2\sigma^2} \;,\,
\end{align}
(see Figure~\ref{Fig.GeometricID}).
A $(2^{nR},n,\lambda_1,\lambda_2)$ DI code further satisfies
\begin{align}
    \label{Eq.GTypeIError}
    P_{e,1}(i) &\leq \lambda_1 \;,\,
    \\
    \label{Eq.GTypeIIError}
    P_{e,2}(i,j) &\leq \lambda_2 \;,\,
\end{align}
for all $i,j\in [\![2^{nR}]\!]$, such that
$i\neq j$.

A rate $R>0$ is called achievable if for every
$\lambda_1,\lambda_2>0$ and sufficiently large $n$, there exists a $(2^{nR},n,\lambda_1,\lambda_2)$ DI code. 
The operational DI capacity of the Gaussian channel is defined as the supremum of achievable rates, and will be denoted by $\mathbb{C}_{DI}(\mathscr{G})$. 
\end{definition}
\subsection{Main Result - Gaussian Channel}
Our DI capacity theorem for the Gaussian channel is stated below.
\begin{theorem}
\label{Th.GDICapacity}
The DI capacity of the Gaussian channel $\sG$ is given by
\begin{align}
    \label{Eq.GDICapacity}
    \mathbb{C}_{DI}(\sG) = \infty \;.\,
\end{align}
\end{theorem}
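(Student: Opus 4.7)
Since the DI capacity is the supremum of achievable rates, it suffices to show that \emph{every} positive rate $R>0$ is achievable; equivalently, the plan is to exhibit, for each $R>0$ and all sufficiently large $n$, a DI code of size $2^{nR}$ with vanishing type I and type II errors. The strategy I would follow is to pack codewords into the power ball $\S_{\f0}(n,\sqrt{nA})\subset\mathbb{R}^n$ with a uniform pairwise Euclidean separation of order $\sqrt{n}$, and then to employ a simple ``distance decoder'' $\D_j=\{\fy:\norm{\fy-\fu_j}^2\leq n(\sigma^2+\delta)\}$.

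\textbf{Codebook via greedy packing.} First I would fix a small parameter $\epsilon\in(0,\sqrt{A})$ with $\sqrt{A}/\epsilon\geq 2^{R+1}$ and a tolerance $\delta\in(0,\epsilon^2/4)$; crucially, both are constants \emph{independent} of $n$. Then I would run a greedy selection inside $\S_{\f0}(n,\sqrt{nA})$: at each step, add a new codeword lying outside every previous exclusion ball $\S_{\fu_j}(n,\epsilon\sqrt{n})$. When the process terminates, these exclusion balls cover $\S_{\f0}(n,\sqrt{nA})$, and a volume comparison yields at least $\mathrm{Vol}(\S_{\f0}(n,\sqrt{nA}))/\mathrm{Vol}(\S_{\f0}(n,\epsilon\sqrt{n}))=(\sqrt{A}/\epsilon)^n\geq 2^{nR}$ codewords, pairwise separated by $\norm{\fu_i-\fu_j}\geq\epsilon\sqrt{n}$.

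\textbf{Error analysis.} For the type I error $P_{e,1}(i)=\Pr(\norm{\fZ}^2>n(\sigma^2+\delta))$, I would apply a standard chi-square concentration inequality to get an exponentially vanishing bound in $n$. For type II, setting $\fv\equiv\fu_j-\fu_i$, I would rewrite the event $\{\norm{\fZ-\fv}^2\leq n(\sigma^2+\delta)\}$ as
\begin{align}
    2\,\fv^{T}\fZ \;\geq\; \norm{\fZ}^2+\norm{\fv}^2-n(\sigma^2+\delta).
\end{align}
On the high-probability event $\{\norm{\fZ}^2\leq n(\sigma^2+\delta)\}$, the separation $\norm{\fv}^2\geq n\epsilon^2$ and the choice $\delta\leq\epsilon^2/4$ force the right-hand side above $n\epsilon^2/2$. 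Since $\fv^{T}\fZ\sim\mathcal{N}(0,\sigma^2\norm{\fv}^2)$ with $\norm{\fv}^2\leq 4nA$, a one-sided Gaussian tail bound gives $\Pr(\fv^{T}\fZ\geq n\epsilon^2/4)\leq \exp(-n\epsilon^4/(128A\sigma^2))$, which vanishes exponentially. Hence any $R>0$ is achievable, so $\mathbb{C}_{DI}(\sG)=\infty$.

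\textbf{Main obstacle.} The central tension is between having enough codewords (at least $2^{nR}$) while keeping them separated enough for the Gaussian noise to discriminate. What makes infinite capacity possible here---and what is absent in the DMC setting---is that the input alphabet is continuous, so the separation $\epsilon$ can be chosen \emph{after} the target rate $R$ is fixed, as a constant in $n$; the logarithm of the packing number $n\log(\sqrt{A}/\epsilon)$ then exceeds $nR$ for any desired $R$ by taking $\epsilon$ small. I expect the most delicate point will be coupling $\epsilon$ to the decoder tolerance $\delta$ in the type II analysis: $\delta$ must be small enough that $\delta\leq\epsilon^2/4$ pushes $\fv^{T}\fZ$ into a true large-deviations regime, yet not so restrictive that the $\norm{\fZ}^2$ concentration step fails.
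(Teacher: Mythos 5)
Your overall approach is sound and is essentially the one the paper takes: pack codewords into the power ball with a pairwise Euclidean separation that is a constant fraction of $\sqrt{n}$, use a distance decoder, and split the type II event via the identity $\norm{\fZ-\fv}^2=\norm{\fZ}^2+\norm{\fv}^2-2\fv^{T}\fZ$. Your greedy (maximal) packing argument is a clean substitute for the paper's saturated-packing/Minkowski--Hlawka route (it avoids the radius-doubling factor), and your sub-Gaussian concentration gives exponential rather than polynomial (Chebyshev) error decay; both are valid variants of the same construction.

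There is, however, a slip in the type II step that, as written, would fail. You condition on the high-probability event $\{\norm{\fZ}^2\leq n(\sigma^2+\delta)\}$ and assert that this forces $\norm{\fZ}^2+\norm{\fv}^2-n(\sigma^2+\delta)\geq n\epsilon^2/2$. But an upper bound on $\norm{\fZ}^2$ only bounds that quantity from \emph{above}; on that event the right-hand side could be as small as $\norm{\fv}^2-n(\sigma^2+\delta)$, and whenever $\sigma^2\geq\epsilon^2$ (any reasonably noisy channel) this is negative, so the Gaussian tail bound on $\fv^{T}\fZ$ never applies. The correct conditioning is the lower-deviation event $\{\norm{\fZ}^2\geq n(\sigma^2-\delta)\}$, which is also high probability by chi-square concentration, and on which
\[
\norm{\fZ}^2+\norm{\fv}^2-n(\sigma^2+\delta)\;\geq\; n\epsilon^2-2n\delta\;\geq\; n\epsilon^2/2
\]
for $\delta\leq\epsilon^2/4$; the rest of your tail estimate then goes through unchanged. (The paper organizes the same algebra dually: it first controls the cross term $\fv^{T}\fZ$ and shows that the remaining event forces $\norm{\fZ}^2$ below $\sigma^2-\delta$, an unlikely lower deviation.) With this sign corrected, your proof is complete.
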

The proof of Theorem~\ref{Th.GDICapacity} is given below.
\begin{proof}
Consider the Gaussian channel $\mathscr{G}$.
To show that the capacity is infinite, it suffices to prove the direct part. We show here that the DI capacity of the Gaussian channel can be achieved using a simple distance-decoder.
A DI code for the Gaussian channel $\sG$ is constructed as follows. Since the decoder can normalize the output symbols by $\frac{1}{\sqrt{n}}$, we have an equivalent input-output relation,
\begin{align}
    \bar{\fY}=\bar{\fx}+\bar{\fZ} \;,\,
\end{align}
where the noise sequence $\bar{\fZ}$ is i.i.d. $\sim \mathcal{N}\left(0,\frac{\sigma^2}{n}\right)$, and an input power constraint 
\begin{align}
    \norm{\bar{\fx}} \leq \sqrt{A} \;,\,
\end{align}
with $\bar{\fx}=\frac{1}{\sqrt{n}}\fx$, $\bar{\fZ}=\frac{1}{\sqrt{n}}\fZ$, and $\bar{\fY}=\frac{1}{\sqrt{n}}\fY$.
\subsubsection*{Codebook construction}
\label{Subsec.CodebookConstructionGaussian}
Let $\mathscr{S}$ denote a sphere packing, i.e., an arrangement of $L$ non-overlapping spheres $\S_{\fu_i}(n,r_0)$, $i\in [\![L]\!]$, that cover a bigger sphere $\S_{\mathbf{0}}(n,r_1)$, with $r_1>r_0$. 
As opposed to standard sphere packing coding techniques, the small spheres are not necessarily entirely contained within the bigger sphere (see Figure~\ref{Fig.Density}). That is, we only require that the spheres are disjoint from each other and have a non-empty intersection with  $\S_{\mathbf{0}}(n,r_1)$.
The packing density $\Delta_n(\mathscr{S})$ is defined as the fraction of the big sphere volume $\text{Vol}\left(\S_{\mathbf{0}}(n,r_1)\right)$ that is  covered by the small spheres, i.e.
\begin{align}
    \Delta_n(\mathscr{S}) \triangleq \frac{\text{Vol}\left(\S_{\f0}(n,r_1)\cap\bigcup_{i=1}^{L}\S_{\fu_i}(n,r_0)\right)}{\text{Vol}(\S_{\f0}(n,r_1))} \;,\,
    \label{Eq.DensitySphereFast}
\end{align}
(see \cite[Ch.~1]{CHSN13}).
A sphere packing is called \emph{saturated} if  no spheres can be added to the arrangement without overlap.
\begin{figure}[htb]
    \centering
	\scalebox{1}{

\begin{tikzpicture}[scale=.55][thick]

\draw[thick] (0,0) circle (3.1cm);

\draw (0,0) circle (1cm);
\draw [] (0,0) circle (1cm);

\draw (2,0) circle (1cm);
\draw [] (2,0) circle (1cm);

\draw (1,1.73) circle (1cm);
\draw [] (1,1.73) circle (1cm);

\draw (-1,1.73) circle (1cm);
\draw [fill=white, fill opacity=0.5] (-1,1.73) circle (1cm);

\draw (-2,0) circle (1cm);
\draw [fill=white, fill opacity=0.5] (-2,0) circle (1cm);

\draw (-1,-1.73) circle (1cm);
\draw [fill=white, fill opacity=0.3] (-1,-1.73) circle (1cm);

\draw (1,-1.73) circle (1cm);
\draw [fill=white, fill opacity=0.4] (1,-1.73) circle (1cm);


\draw (3,-1.73) circle (1cm);
\draw [fill=gray!30!white, fill opacity=0.4] (3,-1.73) circle (1cm);

\draw (3,1.73) circle (1cm);
\draw [fill=gray!30!white, fill opacity=0.4] (3,1.73) circle (1cm);

\draw (0,2*1.73) circle (1cm);
\draw [fill=gray!30!white, fill opacity=0.4] (0,2*1.73) circle (1cm);

\draw (-3,1.73) circle (1cm);
\draw [fill=gray!30!white, fill opacity=0.4] (-3,1.73) circle (1cm);

\draw (-3,-1.73) circle (1cm);
\draw [fill=gray!30!white, fill opacity=0.4] (-3,-1.73) circle (1cm);

\draw (0,-2*1.73) circle (1cm);
\draw [fill=gray!30!white, fill opacity=0.4] (0,-2*1.73) circle (1cm);

\draw (0,0) -- (3.1,0) node [right,font=\small] {$\sqrt{A}-\sqrt{\epsilon}$};

\draw (-3,-1.73)-- (-3.707,-2.437)    node [below,font=\small] {$\sqrt{\epsilon}\qquad$};

\end{tikzpicture}}
	\caption{Illustration of a sphere packing, where small spheres of radius $r_0=\sqrt{\epsilon}$ cover a bigger sphere of radius $r_1=\sqrt{A}-\sqrt{\epsilon}$. 
		The small spheres are disjoint from each other and have a non-empty intersection with  the big sphere.
		Some of the small spheres, marked in gray, are not  entirely contained within the bigger sphere, and yet they are considered to be a part of the packing arrangement.
	     As we assign a codeword to each small sphere center, the norm of a codeword is bounded by $\sqrt{A}$ as required.}
	\label{Fig.Density}
\end{figure}
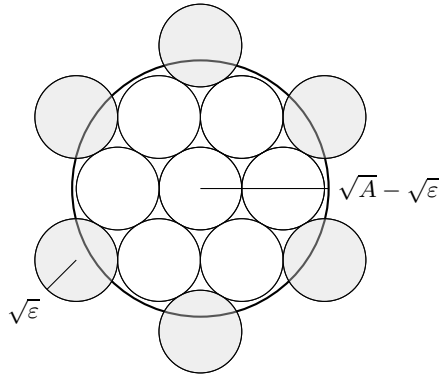
sphere packing is called \emph{saturated} if  no spheres can be added to the arrangement without overlap.

We use a packing argument that has a similar flavor as in the Minkowski--Hlawka theorem in lattice theory \cite{CHSN13}. We use the property that there exists an arrangement $\bigcup_{i=1}^{L} \S_{\fu_i}(n,\sqrt{\epsilon_n})$ of non-overlapping spheres inside $\S_{\f0}(n,\sqrt{A})$ with a density of $\Delta_n(\mathscr{S})\geq 2^{-n}$ \cite[Lem.~2.1]{C10}. Specifically, consider a saturated packing arrangement of $L(n,R)=2^{nR}$ spheres of radius $r_0=\sqrt{\epsilon}$ covering the big sphere $\S_{\f0}(n,r_1=\sqrt{A}-\sqrt{\epsilon})$, i.e., such that no spheres can be added without overlap. Then, for such an arrangement, there cannot be  a point in the big sphere $\S_{\f0}(n,r_1)$ with a distance of more than $2r_0$ from all sphere centers. Otherwise, a new sphere could be added. As a consequence, if we double the radius of each sphere, the  $2r_0$-radius spheres  cover the whole sphere of radius $r_1$. In general, the volume of a hyper-sphere of radius $r$ is given by
\begin{align}
    \text{Vol}\left(\S_\epsilon(\fx,r)\right)=\frac{\pi^{\frac{n}{2}}}{\Gamma(\frac{n}{2}+1)}\cdot r^{n} \;,\,
    \label{Eq.VolS}
\end{align}
(see Eq. (16) in \cite{CHSN13}).
Hence, doubling the radius multiplies the volume by $2^n$. Since the $2r_0$-radius spheres cover the entire sphere of radius $r_1$, it follows that the original $r_0$-radius packing has density at least $2^{ -n}$, i.e.,
\begin{align}
    \Delta_n(\mathscr{S})\geq 2^{-n} \;.\,
    \label{Eq.MinkowskiDeltaFast}
\end{align}

We assign a codeword to the center $\fu_i$ of each small sphere.
The codewords satisfy the input constraint as
$\norm{\fu_i}\leq r_0+r_1=\sqrt{A}$.
Since the small spheres have the same volume, the total number of spheres is bounded from below by
\begin{align}
    L & = \frac{\text{Vol}\left(\bigcup_{i=1}^{L}\S_{\fu_i}(n,r_0)\right)}{\text{Vol}(\S_{\fu_1}(n,r_0))}
    \nonumber\\
    & \geq\frac{\text{Vol}\left(\S_{\f0}(n,r_1)\cap\bigcup_{i=1}^{L}\S_{\fu_i}(n,r_0)\right)}{\text{Vol}(\S_{\fu_1}(n,r_0))}
    \nonumber\\
    & = \frac{\Delta_n(\mathscr{S})\cdot
    \text{Vol}(\S_{\mathbf{0}}(n,r_1)))}{\text{Vol}(\S_{\fu_1}(n,r_0))}
    \nonumber\\
    & \geq 2^{-n}\cdot \frac{
    \text{Vol}(\S_{\mathbf{0}}(n,r_1)))}{\text{Vol}(\S_{\fu_1}(n,r_0))}
    \nonumber\\
    & = 2^{-n}\cdot \frac{r_1^n}{r_0^n} \;,\,
\end{align}
where the second equality is due to (\ref{Eq.DensitySphereFast}),  the inequality that follows holds by (\ref{Eq.MinkowskiDeltaFast}), and the last equality follows from (\ref{Eq.VolS}).
That is, the codebook size satisfies
\begin{align}
     L(n,R) & = 2^{nR}
     \nonumber\\
     & \geq 2^{-n}\cdot\left(\frac{\sqrt{A}-\sqrt{\epsilon}}{\sqrt{\epsilon}}\right)^n \;.\,
\end{align}
%
Hence,
\begin{align}
    \label{Eq.RateFast}
    R\geq \frac{1}{2}\log\left(\frac{A}{\epsilon}\right)-1 \;.\,
\end{align}
\subsubsection*{Encoding}
Given a message $i\in [\![2^{nR}]\!]$, transmit $\bar{\fx}=\bar{\fu}_i$.
\subsubsection*{Decoding}
Let $\delta>0$. To identify whether a message $j\in \M$ was sent, the decoder checks whether the channel output $\mathbf{y}$ belongs to the following decoding set, 
\begin{align}
    \D_j = \left\{ \bar{\fy}\in\mathbb{R}^n \,:\; \norm{\bar{\fy}-\bar{\fu}_j}
    \leq \sqrt{\sigma_Z^2+\delta} \right\} \;.\,
\end{align}
\subsubsection*{Error Analysis}
Consider the type I error, i.e., when the transmitter sends $\bar{\fu}_i$, yet $\bar{\fY}\notin\D_i$. For every $i\in[\![2^{nR}]\!]$, the type I error probability is bounded by
\begin{align}
    P_{e,1}(i)&= \Pr\left(\norm{\bar{\fY}-\bar{\fu}_i}^2 >  \sigma_Z^2+\delta \,\big|\, \bar{\fx}=\bar{\fu}_i \right)
    \nonumber\\
    &=\Pr\left(\norm{\bar{\fZ}}^2 > \sigma_Z^2 + \delta \right)
    \nonumber\\
    &=\Pr\left(\sum_{t=1}^n {\bar{Z}_t}^2> \sigma_Z^2 + \delta \right)
    \nonumber\\
    & \leq \frac{3\sigma_Z^4}{n\delta^2}
    \nonumber\\
    & \leq \lambda_1
    \;,\,
\end{align}
which tends to zero as $n\to\infty$, where the last inequality holds by Chebyshev's inequality.

Next, we address the type II error, i.e., when $\bar{\fY}\in\D_j$ while the transmitter sent $\bar{\fu}_i$.
Then, for every $i,j\in[\![2^{nR}]\!]$, where $i\neq j$, the type II error probability is given by
\begin{align}
    P_{e,2}(i,j)&= \Pr\left( \norm{\bar{\fY}-\bar{\fu}_j}^2\leq \sigma_Z^2+\delta \,\big|\, \bar{\fx}=\bar{\fu}_i \right)
    \nonumber\\
    &=\Pr\left( \norm{\bar{\fu}_i-\bar{\fu}_j+\bar{\fZ}}^2\leq \sigma_Z^2+\delta \right) \;.\,
    \label{Eq.Pe2G}
\end{align}
Observe that the square norm can be expressed as
\begin{align}
    \norm{\bar{\fu}_i-\bar{\fu}_j+\bar{\fZ}}^2=
     \norm{\bar{\fu}_i-\bar{\fu}_j}^2+\norm{\bar{\fZ}}^2+2\sum_{t=1}^n (\bar{u}_{i,t}-\bar{u}_{j,t})Z_t \;.\,
     \label{Eq.Pe2norm}
\end{align}
Then, define the event
\begin{align}
    \E_0 = \left\{ \left| \sum_{t=1}^n (\bar{u}_{i,t} - \bar{u}_{j,t}) \bar{Z}_t \right| > \frac{\delta}{2} \right\} \;,\,
\end{align}
By Chebyshev's inequality, the probability of this event vanishes, 
\begin{align}
    \Pr(\E_0) & \leq \frac{\sigma_Z^2\sum_{t=1}^n (\bar{u}_{i,t}-\bar{u}_{j,t})^2}{n\left( \frac{\delta}{2} \right)^2}
    \nonumber\\
    & = \frac{4 \sigma_Z^2\norm{\bar{\fu}_i - \bar{\fu}_j}^2}{n\delta^2}
    \nonumber\\
    & \leq
    \frac{16 \sigma_Z^2 A}{n\delta^2}
    \nonumber\\
    & \leq \zeta \;,\,
    \label{Eq.PeE0G}
\end{align}
for sufficiently large $n$, where $\zeta > 0$ is arbitrary constant, where the first inequality holds since the sequence $\{\bar{Z}_t\}$ is i.i.d. $\sim\mathcal{N}\left(0,\frac{\sigma_Z^2}{n}\right)$, and the second inequality follows as
\begin{align}
    \norm{\bar{\fu}_i - \bar{\fu}_j}^2 & \leq
    (\norm{\bar{\fu}_i}+\norm{\bar{\fu}_j})^2
    \nonumber\\
    & \leq (\sqrt{A}+\sqrt{A})^2
    \nonumber\\
    & = 4A \;,\,    
\end{align}
by the triangle inequality. Now let us define following event
\begin{align}
    \A_{i,j} \left( \sigma_Z^2 + \delta \right) \equiv \left\{ \bar{\fZ} \in \mathbb{R}^n \;:\, \norm{\bar{\fu}_i - \bar{\fu}_j + \bar{\fZ}}^2 \leq \sigma_Z^2 + \delta \right\} \;,\,
    \label{Ineq.Event_A_i_j}
\end{align}
Observe that given the complementary event $\E_0^c$, we have
\begin{align}
    2\sum_{t=1}^n (\bar{u}_{i,t}-\bar{u}_{j,t})\bar{Z}_t\geq - \delta \;,\,
\end{align}
hence, by (\ref{Eq.Pe2norm}), the event $\A_{i,j} \left( \sigma_Z^2 + \delta \right)$ implies following event
\begin{align}
    \E_1 = \left\{ \bar{\fZ} \in \mathbb{R}^n \;:\, \norm{\bar{\fu}_i-\bar{\fu}_j}^2+\norm{\bar{\fZ}}^2\leq \sigma_Z^2 + 2\delta \right\} \;.\,
    \label{Ineq.Event_E_1}
\end{align}
Applying the law of total probability to (\ref{Eq.Pe2G}), we have
\begin{align}
    P_{e,2}(i,j) & \stackrel{(a)}{=}
    \Pr \left( \left\{ \A_{i,j} \left( \sigma_Z^2 + \delta \right) \right\} \cap \E_0 \right) + \Pr\left( \left\{ \A_{i,j} \left( \sigma_Z^2 + \delta \right) \right\} \cap \E_0^c \right)
    \nonumber\\
    & \stackrel{(b)}{\leq}
    \Pr(\E_0) + \Pr\left( \left\{ \A_{i,j} \left( \sigma_Z^2 + \delta \right) \right\} \cap \E_0^c \right)
    \nonumber\\
    & \stackrel{(c)}{\leq}
    \zeta + \Pr \left( \E_1 \right) \;,\,
\end{align}
where $(a)$ is due to (\ref{Ineq.Event_A_i_j}), $(b)$ holds since each probability is bounded by $1$ and $(c)$ follows from (\ref{Ineq.Event_E_1}).
Based on the codebook construction, each codeword is surrounded by a sphere of radius $\sqrt{\epsilon}$, which implies
\begin{align}
    \norm{\bar{\fu}_i-\bar{\fu}_j} \geq \sqrt{\epsilon} \;,\,
\end{align}
Hence,
\begin{align}
    - \norm{\bar{\fu}_i-\bar{\fu}_j}^2 \leq - \epsilon \;.\,
\end{align}
Thus, choosing $\delta = \frac{\epsilon}{3}$, we obtain
\begin{align}
    P_{e,2}(i,j)   
    &\leq
    \Pr\left( \norm{\bar{\fZ}}^2\leq \sigma_Z^2-\delta
    \right)+\zeta
    \nonumber\\
    &=\Pr\left( \sum_{t=1}^n \bar{Z}_t^2-\sigma_Z^2\leq -\delta
    \right)+\zeta
    \nonumber\\
    &\leq 
    \frac{\sum_{t=1}^n \text{Var}(\bar{Z}_t^2)}{\delta^2}+\zeta
    \nonumber\\
    &\leq \frac{n\cdot\mathbb{E}\{ \bar{Z}_t^4\}}{\delta^2}+\zeta
    \nonumber\\
    & = \frac{3\sigma_Z^4}{n\delta^2} + \zeta
    \nonumber\\
    & \leq \lambda_2 \;,\,
\end{align}
for sufficiently large $n$, where $\lambda_2 > 0$ is arbitrary constant, since the fourth moment of a Gaussian variable 
$V\sim \N(0,\sigma_V^2)$ is $\mathbb{E}\{V^4\}=3\sigma_V^4$.

We have thus shown that for every 
$\lambda_1,\lambda_2 > 0$ and sufficietnly large $n$,
there exists a 
 $(2^{nR}, n, \lambda_1, \lambda_2)$ code.
The proof follows by taking the limits $n\rightarrow\infty$, then $\gamma$, $\delta$ $\rightarrow 0$, hence
$\epsilon$, $\beta$ $\rightarrow 0$ and
$R\to\infty$ by (\ref{Eq.RateFast}).
\end{proof}

\subsection{Alternative Proof: Discretization}
In this subsection, we give a second proof for the DI capacity theorem of the Gaussian channel, Theorem~\ref{Th.GDICapacity}. We show that the theorem can be obtained from our result on the DMC in Theorem~\ref{Th.DDICapacity}, using discretization.
We show that given a Gaussian random variable $X \sim \N(0, A)$, the entropy of the discretized variable is approximately
\begin{align}
    \frac{1}{2}\log(2\pi eA)-\frac{2\Delta}{\sqrt{2\pi A}}+\log\frac{1}{\Delta} \;,\,
\end{align}
where $\Delta>0$ is the discretization step. Therefore, as $\Delta$ tends to zero, the discretized entropy grows to infinity. 

Our discretization procedure is similar to the one presented in \cite[see Sec. 3.4.1]{GK12}.
Consider a Gaussian random variable $X \sim \N(0, A)$, hence $h(X) = \frac{1}{2}\log(2\pi e A)$. Let $J>0$ be arbitrarily large and $\Delta>0$ be arbitrarily small. Consider the discretized variable
\begin{align}
    \hX \in \left\{-J \Delta,\,-(J-1) \Delta,\,\cdots,\,-\Delta,~0,\,\Delta,\,\cdots,\,(J-1) \Delta,\,J\Delta \right\} \;,\,
\end{align}
obtained by mapping $X$ to the closest discretization point $\hX = g_{J,\Delta}(X)$, such that $|\hX| \leq |X|$. Clearly, $\mathbb{E}(\hX^2) \leq \mathbb{E}(X^2) = A$. More specifically, 
\begin{align}
    \label{Eq.Xhat}
    g_{J,\Delta}(x) =
    \begin{cases}
            k\Delta & k\Delta \leq x < (k+1)\Delta \;,\,
            \\
            -k\Delta & -(k+1)\Delta < x \leq -k\Delta \;,\,
            \\
            J\Delta & x\geq J\Delta \;,\,
            \\
            -J\Delta & x\leq -J\Delta \;.\,
            \end{cases}
    \end{align}
Let $\hY = \hX+Z$ be the output corresponding to the input 
$\hX$ and let $\widetilde{Y} = g_{J',\Delta}(\hY)$ be a discretized version of $\hY$ defined in the same manner.
Observe that the rows of the discretized DMC from $\hX$ to $\widetilde{Y}$ are distinct for sufficiently large $J$ and small $\Delta$, since for every pair of inputs $x_1,x_2\in\mathbb{R}$, $x_1\neq x_2$, we have
\begin{align}
    f_Z(y-x_1)\neq f_Z(y-x_2) \;,\,    
\end{align}
for some $y\in\mathbb{R}$ (e.g. $y=x_1$).
Thus, based on Theorem~\ref{Th.DDICapacity}, 
any rate
\begin{align}
    R = H(\hX)-\epsilon \;,\,
    \label{Eq.DiscretizRate}
\end{align}
is achievable for the DMC with input $\hX$ and output $\widetilde{Y}$ under power constraint $A$, where $\epsilon>0$ is arbitrarily small. By  (\ref{Eq.Xhat}), the probability distribution of the discretized variable is specified by
\begin{align}
    \Pr(\hX=\pm k\Delta) =
    \begin{cases}
        p_k & k \in [\![J-1]\!] \;,\, \\
        \sum_{k=J}^{\infty} p_{k} & k \in \left\{ J,J+1,\cdots \right\} \;,\, \\
        2p_0 & k= 0 \;,\, \\
    \end{cases}
\end{align}
where
\begin{align}
\label{Eq.pk}
    p_k = \int_{k\Delta}^{(k+1)\Delta} f_X(x) \, dx \;,\,
\end{align}
for $k \in \left\{0,\,1,\cdots \right\}$ and for the case $k=0$ we have
\begin{align}
    \Pr(\hX=0) & = \int_{0}^{\Delta} f_X(x) \, dx
    \nonumber\\
    & = \frac{1}{2} \int_{-\Delta}^\Delta f_X(x) \, dx \;,\,
\end{align}
Thus, the corresponding entropy is bounded by
\begin{align}
    R + \epsilon
    & = H(\hX)
    \nonumber\\
    & = -\sum_{k=-J}^J \Pr(\hX = k\Delta) \log \Pr(\hX = k\Delta) 
    \nonumber\\
    & \geq
    -\sum_{k=-(J-1)}^{J-1} \Pr(\hX = k\Delta) \log \Pr(\hX = k\Delta)
    \nonumber\\
    \label{Eq.QuantizedEntropy}
    & = -2p_0\log(2p_0)-2\sum_{k=1}^{J-1} p_k \log p_k \;.\,
    \end{align}
Since the Gaussian density function $f_X$ is continuous, then, by the \emph{mean value theorem}, there exists a value $x_k$ within each discretization interval such that
\begin{align}
    f_X(x_k) \Delta & = \int_{k\Delta}^{(k+1)\Delta} f_X(x) \, dx
    \nonumber\\
    & = p_k \;,\,
\end{align}
where the last equality holds by the definition of $p_k$ in (\ref{Eq.pk}). Plugging this into (\ref{Eq.QuantizedEntropy}), we obtain
\begin{align}
    H(\hX) & \geq
    -2 f_X(x_0)\Delta \log (2f_X(x_0) \Delta)
    -2\sum_{k=1}^{J-1} f_X(x_k) \Delta \log (f_X(x_k)\Delta)
    \nonumber\\
    & = -2 f_X(x_0) \Delta \log (2f_X(x_0))
    -2\sum_{k=1}^{J-1} f_X(x_k)\Delta \log (f_X(x_k))
    \nonumber\\
    & = -2 f_X(x_0)\Delta \log \Delta
    -2\sum_{k=1}^{J-1} f_X(x_k)\Delta \log \Delta \;.\,
\end{align}
Then, taking $J$ to infinity, we have
\begin{align}
    \lim_{J\to\infty} H(\hX)&\geq      
    -2 f_X(x_0)\Delta \log (2f_X(x_0))
     -2\sum_{k=1}^{\infty} f_X(x_k)\Delta \log (f_X(x_k))
    -\log \Delta \left( 2\sum_{k=0}^{\infty} f_X(x_k) \Delta \right) 
    \nonumber\\
    &= -2 f_X(x_0)\Delta 
     -2\sum_{k=0}^{\infty} f_X(x_k)\Delta \log (f_X(x_k))
    -\log \Delta \;,\,
    \label{Eq.GentropyDelta1}
\end{align}
since
\begin{align}
    2\sum_{k=0}^{\infty}f_X(x_k)\Delta & = \sum_{k=-\infty}^\infty \Pr(X=k\Delta)
    \nonumber\\
    & = 1 \;,\,
\end{align}
As the Gaussian pdf is bounded by $f_X(x)\leq (2\pi A)^{-1/2}$, the last bound, (\ref{Eq.GentropyDelta1}), implies
\begin{align}
    \lim_{J\to\infty} H(\hX)&\geq      
     -2\sum_{k=0}^{\infty} \Delta f_X(x_k) \log (f_X(x_k))
    -\frac{2}{\sqrt{2\pi A}}\Delta+\log \frac{1}{\Delta} \;.\,
    \label{Eq.GentropyDelta2}
\end{align}
At last, we take the limit $\Delta\to 0^+$. 
First, consider the sum.
Since $f_X(x)\log f_X(x)$ is Riemann integrable,
\begin{align}
    \lim_{\Delta\to 0^+}\left(-2\sum_{k=0}^{\infty} \Delta f_X(x_k) \log (f_X(x_k)) \right) & = -2\int_{0}^{\infty} f_X(x)\Delta \log (f_X(x)) \, dx
     \nonumber\\
     & = -\int_{-\infty}^{\infty} f_X(x)\log f_X(x) \, dx \nonumber\\
     & = h(X)
     \nonumber\\
     & = \frac{1}{2}\log(2\pi e A) \;.\,
    \label{Eq.GentropyDelta3}
\end{align}
The second term in the right hand side of (\ref{Eq.GentropyDelta2}) tends to zero as 
$\delta\to 0^+$.
Hence, as $J\to\infty$ and $\delta\to 0^+$, we obtain $ R+\epsilon = H(\hX)$ converges to
\begin{align}
    \frac{1}{2}\log(2\pi e A) + \lim_{\Delta\to 0^+}\log \frac{1}{\Delta} \;.\,
\end{align}
which tends to $\infty$. This completes the proof.
\qed
\section{Summary and Discussion}
 \label{Sec.SummaryDiscussions}
To summarize, we have established the deterministic identification (DI) capacity of a channel subject to an input constraint. Our capacity formula is given in terms of the entropy of the channel input. For  a Gaussian Channel $\mathscr{G}$, the DI capacity is  
$\mathsf{C}_{DI}(\mathscr{G})=\infty$, regardless of  the noise (as long as it has finite energy).
Our results have the following geometric interpretation.
At a first glance, it may seem reasonable that for the purpose of identification, one codeword could represent two messages.
While identification allows overlap between decoding regions \cite{KE05,AADT20},
overlap at the encoder is not allowed for deterministic codes.
We observed that when two messages are represented by the same codeword, then, if the  probability of missed identification is upper bounded by $\epsilon$, then the  probability of false identification is lower bounded by $1-\epsilon$. Hence, low  probability of type I error comes at the expense of high probability of type II error, and vice versa.
Thus, deterministic coding imposes the restriction that every message must have a distinct codeword.
The converse proof follows from this property in a  straightforward manner since the volume of the input subspace of sequences that satisfy the input constraint is 
$\approx 2^{n\mathsf{C}_{DI}(\W)}$, with 
\begin{align}
    \mathsf{C}_{DI}(\W) = \underset{p_X \,:\; \mathbb{E}\{\phi(X)\} \leq A}{\max}~H(X) \;.\,
\end{align}
A similar principle guides the direct proof as well. The input space is covered such that each codeword is surrounded by a sphere of radius $\frac{n\epsilon}{2}$ to separate the codewords. For the Gaussian channel, the DI capacity can be achieved using a simple distance-decoder.
 
Next, we compare and discuss different results from the literature on the DI capacity. We will use the notation of $\underline{\mathbb{C}}_{DI}(\W)$ for the DI capacity in the double-exponential scale, or equivalently, when the rate is defined as
\begin{align}
    R=\frac{1}{n}\log \log (\text{$\#$ of messages}) \;,\,
\end{align}
as stated in \cite{AD89}, and confirmed in this paper as well,
 \begin{align}
     \underline{\mathbb{C}}_{DI}(\W)=0 \;,\,
 \end{align}
since the code size of DI codes scales only exponentially in block length.

On the other hand, as observed by Bracher and Lapidoth \cite{BL17}, if one considers an average error criterion instead of the maximal error, then the double-exponential performance of randomized-encoder codes can also be achieved using deterministic codes. 

Alternatively, one may consider the $\epsilon$-capacity, for a fixed $0<\epsilon<1$. In the double exponential scale, a rate $R$ is called $\epsilon$-achievable if there exists a
 $(2^{{2^{nR}}},n,\epsilon,\epsilon)$ code for sufficiently large $n$. The DI $\epsilon$-capacity $\underline{\mathbb{C}}^{\epsilon}_{DI}(\W)$ is then defined as the supremum of $\epsilon$-achievable rates.
As the DI and RI capacities in the double exponential scale have strong converses \cite{AD89,HV92,Bur94},
\begin{align}
    \underline{\mathbb{C}}_{RI}^{\epsilon}(\W) &= \underline{\mathbb{C}}_{RI}(\W)=
    \max_{p_X} I(X;Y) \;,\,
    \\
    \underline{\mathbb{C}}_{DI}^{\epsilon}(\W) &=\underline{\mathbb{C}}_{DI}(\W)= 0 \;,\,
    \intertext{for $0<\epsilon<\frac{1}{2}$. On the other hand, for $\epsilon\geq \frac{1}{2}$ we have}
 \label{Eq.InfiniteCapacity}
    \underline{\mathbb{C}}_{DI}^{\epsilon}(\W) & = \underline{\mathbb{C}}_{RI}^{\epsilon}(\W)= \infty \;.\,
\end{align}
(see \cite{BV00}). To understand (\ref{Eq.InfiniteCapacity}), suppose  $\epsilon>\frac{1}{2}$, and consider an arbitrary set of codewords with a stochastic decoder that makes a decision for the identification hypothesis by flipping a fair coin \cite{AD89}. Both error probabilities of type I and II equal $\frac{1}{2}$, and are thus smaller than $\epsilon$.
  
 By providing a detailed proof for the DI capacity theorem with and without an input constraint, we have filled the gap in the previous analysis  \cite{AD89,AN99} as well.
 In particular, in \cite{AN99}, Ahlswede and Cai asserted that the DI capacity for a compound channel is given by
 \begin{align}
    C_{DI}(\W_{\text{compound}}) = \underset{p_X(x)}{\max}~ \underset{s\in\S}{\min}~H(\hat{X}(s)) \;,\,
 \end{align}
 where $s\in\S$ is the channel state, and the map $\hat{X}(s)$ is induced from $X$ by a partition of the input alphabet to equivalent classes as specified in \cite[Sec.~I.F]{AN99}. This result immediately yields Corollary \ref{Co.DDICapacity0}, since the DMC is a special case of a compound channel with a single state value. Indeed, taking $|\S| = 1$ and considering the reduced channel 
$W_r$ (see Definition~\ref{Def.ReducedChannel}),
it can be readily shown that $\hat{X}(s)=X$. Nonetheless, a significant part of the proof in \cite{AN99} is missing. At the beginning of Sec. VII in \cite{AN99}, the following claim is given: ``It was shown in [A'80] that for any channel $\tilde{V}:\X\to¸\Y$ without two identical rows, any $u_1,u_2,\epsilon>0$, sufficiently large $n$ and any $\U\subset\X^n$ such that for all $u,u'\in\U, d_H(u,u')>n\epsilon$, there exists a family of subsets of $\Y^n$, say $\D_u,u\in\U$, such that $\tilde{V}^n(\D_u|u)>1-u_1$ and $\tilde{V}^n(\D_u|u')<u_2$ for all $u\neq u'$, where $d_H$ is the Hamming distance.", where [A'80]  refers to a paper by Ahlswede \cite{A80} on the arbitrarily varying channel, and does not include identification. In fact, the work \cite{A80} was published 9 years before the introduction of the identification problem by Ahlswede and Dueck \cite{AD89}. 
 Unfortunately, a straightforward extension of the methods in \cite{A80}, using decoding territories, does not seem to yield a proof of such a property \cite{CaiEmail2}. In this sense, our derivation completes the proof of Ahlswede and Cai's capacity theorem \cite{AN99} for the compound channel.
\section*{Acknowledgments}
We gratefully thank Ning Cai (ShanghaiTech University) for a discussion on the DI capacity for DMCs. We thank Robert Schober (Friedrich Alexander University) for discussions and questions about the application of identification theory in molecular communications.

Mohammad J. Salariseddigh, Uzi Pereg, and Christian Deppe  were supported by 16KIS1005 (LNT, NEWCOM). Holger Boche was supported by 16KIS1003K (LTI, NEWCOM) and the BMBF within the national initiative for “Molecular Communications (MAMOKO)” under grant 16KIS0914.
\appendices
\section{Proof of Lemma~\ref{Lem.Reduction}}
\label{App.Reduction}
Let $\W$ be a given DMC, with a stochastic matrix $W:\X\to\Y$ and its reduced version $W_r:\X_r\to\Y$ as defined in Definition~\ref{Def.ReducedChannel}. 
Observe that the capacity of the original channel is lower bounded by that of the reduced channel, i.e.,
    \begin{align}
       \mathbb{C}_{DI}(\W) \geq \mathbb{C}_{DI}(\W_r) \;,\,
    \end{align}
since every code for $W_r$ can also be used for $W$. Hence, it remains to be shown that 
    \begin{align}
       \mathbb{C}_{DI}(\W_r) \geq \mathbb{C}_{DI}(\W) \;.\,
    \end{align}

Assume without loss of generality that the input alphabet of the original channel $\W$ is given by $\X = \{1,2,\cdots,|\X|\}$. 
Let $L~:~\X\to\X_r$ denote the projection of the input alphabet onto the equivalent classes, 
\begin{align}
    &L[x] = z(\ell) \quad \text{iff} \quad x\in\X(\ell) \;.\,
\end{align}
Now let $(\U,\D)$ be a $(2^{nR},n,\lambda_1,\lambda_2)$ code for $\W$. Then the type I probability of error can be expressed as
\begin{align}
    P_{e,1}^W(i) & = \sum_{y^n\notin \D_i} W^n(y^n|u_i)
    \nonumber\\
    & = \sum_{y^n\notin \D_i} \prod_{t=1}^n W(y(t)|u_{i}(t)) \;,\,
\end{align}
where we use the notation $y^n = \big(y(t)\big)_{t=1}^n$ and  $u_i = \big(u_i(t)\big)_{t=1}^n$.
Next, we define a code $(\widetilde{\U},\D)$ for the channel $\W_r$ where the codebook consists of the following codewords,
\begin{align}
    &\tilde{u}_i = \big(L[u_{i}(t)]\big)_{t=1}^n \;.\,
\end{align}
Now recall that we have defined the equivalence classes such that input letters in the same equivalence class correspond to identical rows in the channel matrix $W$ (see Definition~\ref{Def.ReducedChannel}). Thus, by definition, 
\begin{align}
    W_r(y|L[x]) = W(y|x) \;,\,
\label{Eq.Lchannel}
\end{align}
for all $x\in\X$ and $y\in\Y$. Hence, the error probability of type I for the reduced channel $\W_r$ satisfies
\begin{align}
    P_{e,1}^{W_r}(i) &\stackrel{(a)}{=} \sum_{y^n\notin \D_i} W_r(y^n|\tilde{u}_i) \nonumber\\
    &\stackrel{(b)}{=}\sum_{y^n\notin \D_i} \prod_{t=1}^n W_r\big(y(t)|L[u_{i}(t)] \big)\nonumber\\
    &\stackrel{(c)}{=} \sum_{y^n\notin \D_i} \prod_{t=1}^n W(y(t)|u_{i}(t)) \nonumber\\
    &\stackrel{(d)}{=} \sum_{y^n\notin \D_i} W ^n(y^n|u_i)\nonumber\\
    &\stackrel{(e)}{=} P_{e,1}^{W}(i) \;,\,
\end{align}
for all $i$, where $(a)$ and $(e)$ are due to (\ref{Eq.TypeIErrorDef}); $(b)$ and $(d)$ hold since the channel is memoryless, and $(c)$ follows from (\ref{Eq.Lchannel}). By the same considerations, we also have $P_{e,2}^{W_r}(i,j)=P_{e,2}^{W}(i,j)$ for all $ j\neq i$. That is, the error probabilities of the code 
$(\widetilde{\U},\D)$ are the same as those of the original code for $\W$.
Therefore, the code constructed above for $\W_r$ is also  a $(2^{nR},n,\lambda_1,\lambda_2)$ code, and the proof of Lemma~\ref{Lem.Reduction} follows.
\qed

\bibliographystyle{IEEEtran}
\bibliography{IEEEabrv,confs-jrnls,refrences}

\end{document}